\newtheorem{proposition}{Proposition}
\DeclarePairedDelimiter\floor{\lfloor}{\rfloor}
\newcommand{\Hline}{%
  \noalign{\hrule height 0.4pt}%
  \noalign{\vskip 1pt}%
  \noalign{\hrule height 0.4pt}%
}
\newcommand{\simind}{\stackrel{\mbox{\tiny ind}}{\sim}}
\newcommand{\simiid}{\stackrel{\mbox{\tiny iid}}{\sim}}
\newcommand{\btheta}{\bm{\theta}}
\newcommand{\by}{\bm{y}}
\newcommand{\bY}{\bm{Y}}
\newcommand{\bw}{\bm{w}}
\newcommand{\bz}{\bm{z}}
\newcommand{\bdelta}{\bm{\delta}}
\newcommand{\bpsi}{\bm{\psi}}
\newcommand{\opt}{\textsc{opt}}
\newcommand{\dd}{\mathrm{d}}
\renewcommand{\Pr}{\mathrm{Pr}}
\begin{document}

\title{Uncertainty quantification and multi-stage variable selection\\ for personalized treatment regimes}
\author{Jiefeng Bi, Matteo Borrotti, and Bernardo Nipoti 
\\
Department of Economics, Management and Statistics, University of Milano-Bicocca, Italy}

\allowdisplaybreaks



\maketitle

\begin{abstract}
A dynamic treatment regime is a sequence of medical decisions that adapts to the evolving clinical status of a patient over time. To facilitate personalized care, it is crucial to assess the probability of each available treatment option being optimal for a specific patient, while also identifying the key prognostic factors that determine the optimal sequence of treatments. This task has become increasingly challenging due to the growing number of individual prognostic factors typically available. In response to these challenges, we propose a Bayesian model for optimizing dynamic treatment regimes that addresses the uncertainty in identifying optimal decision sequences and incorporates dimensionality reduction to manage high-dimensional individual covariates. The first task is achieved through a suitable augmentation of the model to handle counterfactual variables. For the second, we introduce a novel class of spike-and-slab priors for the multi-stage selection of significant factors, to favor the sharing of information across stages. The effectiveness of the proposed approach is demonstrated through an extensive simulation study and illustrated using clinical trial data on severe acute arterial hypertension.
\end{abstract}

\textbf{Keywords}
Bayesian augmented learning; dependent spike-and-slab; dynamic treatment regimes; variable selection.

\section{Introduction} 
Dynamic treatment regimes (DTRs) have emerged as a popular framework for implementing personalized care, offering adaptive medical decisions that adjust over time in response to a patient's evolving clinical state \citep{murphy2003optimal, murphy2005experimental}. By exploiting patient-specific information, DTRs aim to recommend optimal treatments at each stage of care, proving especially effective for chronic diseases and complex conditions. The task of constructing optimal DTRs based on patient data can be viewed as a multi-stage decision-making problem, akin to a learning agent navigating an environment by selecting actions that maximize rewards. This framework is rooted in reinforcement learning, a branch of machine learning that has provided numerous solutions to multi-stage decision problems, from dynamic programming \citep{Bel57} to Q-learning \citep{Wat89}. In the statistical literature, Q-learning has been adapted to address the specific challenges of DTRs \citep{zhao2009reinforcement}. Building on this concept, many contributions on DTRs \citep[see][and references therein]{chakraborty2013statistical,tsiatis2019dynamic} have employed regression-based models, using 
backward induction to account for the dynamic nature of the problem. The optimal treatment regime is then defined as the sequence of treatments that maximizes the overall expected response for an individual, also known as the individual value. The literature on Bayesian methods for DTRs has also advanced significantly over the past two decades. A predictive framework was introduced by \citet{arjas2010, arjas2012} and later extended by \citet{saarela2015bayesian, saarela2016doubly}. Foundational work from a decision-theoretic perspective appears in \citet{dawid2010overview, dawid2015statistical}. A comprehensive overview is provided by \citet{oganisian2021}, with recent advancements in \citet{rodriguez2022, rodriguez2023}. A coherent Bayesian framework for regression-based models was introduced by \citet{murray2018bayesian}. Within this framework, we propose a modeling strategy called Bayesian augmented learning, which is designed to effectively quantify the uncertainty involved in identifying optimal treatment decision sequences, a fundamental task that has been largely overlooked in the existing literature. A second objective of our proposal is to perform variable selection within the context of DTRs, a crucial step given the increasing availability of prognostic factors for individual patients. In regression-based approaches to DTRs, most variable selection methods have been developed within the frequentist paradigm \citep{qian2011performance, 
song2015penalized, fan2016sequential, wallace2019model, zhu2019proper, bian2021variable}.
 Although less explored in the DTR literature, Bayesian methods offer a compelling alternative, as they naturally accommodate the quantification of uncertainty regarding the significance of prognostic factors. We address this gap by pairing our modeling approach with a Bayesian variable selection strategy tailored to the dynamic structure of DTRs, achieved by adapting the class of spike-and-slab priors \citep[see][]{Tad21} to this setting. 
\subsection{Our contribution}
The specific contributions of this work can be summarized as follows.
\begin{itemize}
\item[i)] We propose a modeling strategy called Bayesian augmented learning, which addresses the presence of counterfactual variables in DTRs through a suitable augmentation. 
\item[ii)] We introduce the dependent spike-and-slab prior for variable selection, taking into account the association between the significance of the same covariate across multiple stages. 
\item[iii)]We conduct a comprehensive empirical study to evaluate the performance of the proposed modeling framework and compare it with state-of-the-art methods for estimating DTRs.
\item[iv)] We demonstrate our approach using data from the MIMIC-III clinical database \citep{johnson2016mimic}, which includes multiple treatment options and numerous prognostic factors.
\end{itemize} 
The paper proceeds as follows. Section~\ref{section2} outlines our modeling approach. Section~\ref{section4} assesses its performance through simulation studies, while Section~\ref{section5} illustrates its application to MIMIC-III data. Section~\ref{section6} concludes with a discussion. Proofs, computational details, and additional results from the data analyses are provided in the Appendix.

\section{Bayesian augmented learning with variable selection}\label{section2}
We consider a two-stage DTR framework where $a_j\in\mathcal{A}_j$ indicates the treatment, or action, assigned at the $j$th stage, with the index $j=1,2$ henceforth used to refer to specific stages. We let $Y=Y_1+Y_2$ denote the overall payoff, with $Y_j$ being the part of the payoff observed after the $j$th stage. Dealing with specific applications in which $Y_1$ is not observable is possible by letting $Y_1=0$ and $Y=Y_2$. Hereafter, the index $i = 1, \ldots, n$ denotes quantities related to the $i$th statistical unit, e.g. $a_{ij}$ for the treatment assigned to unit $i$ at stage $j$, and similarly for $Y_i$ and $Y_{ij}$. In order to stress the dependence of a payoff on the assigned treatments $a_{i1}$ and $a_{i2}$, we write $Y_i(a_{i1},a_{i2}) = Y_{i1}(a_{i1})+Y_{i2}(a_{i1},a_{i2})$. 
Throughout this work, capital letters denote random variables, while lowercase letters represent their realizations. For instance, the observed payoffs $y_i$ are realizations of the corresponding random variables $Y_i(a_{i1},a_{i2})$. 
We further introduce the notation $Y_i^{\opt}$ to denote the overall payoff $Y_{i}(a_{i1},a_{i2}^\opt)$, where $a_{i2}^\opt$ is a treatment that, given that the treatment $a_{i1}$ was assigned at stage 1 and according to an optimal decision rule yet to be specified, is optimal at stage 2. Following the principle of backward induction \citep[see, e.g.,][]{tsiatis2019dynamic}, we consider a sequence of two 
regression models, one for the second-stage payoffs and one for the optimal overall payoffs. 
Namely, 
\begin{equation}\label{eq:model1}
\begin{aligned}
   Y_{i2}
    \mid a_{i1}, a_{i2},\bm{z}_{i2},\btheta_2 &\simind f_2(y\mid a_{i1},a_{i2},\bm{z}_{i2};\btheta_2)\\
    Y_{i}^{\opt}
    \mid a_{i1},\bm{z}_{i1},\btheta_1 &\simind f_1(y\mid a_{i1},\bm{z}_{i1};\btheta_1),
    \end{aligned}
\end{equation}
where $\bm{z}_{ij}$ denotes the vector of covariates available for the $i$th individual at the $j$th stage, $f_1$ and $f_2$ are regression functions, $\btheta_1$ and $\btheta_2$ are vectors of regression coefficients. To facilitate the identification of causal effects of stage-specific actions, we henceforth 
adopt the following standard assumptions: the stable unit treatment value assumption \citep{rubin1978bayesian} and the assumption of no unmeasured confounders \citep{rubin1980bias}. See Appendix \ref{sec:appASS} 
for more details and \citet{tsiatis2019dynamic} for a thorough discussion. Model \eqref{eq:model1} is completed by specifying an optimal decision rule. Following a large body of literature on DTR optimization, we adopt a dynamic programming approach \citep{Bel57} and define the optimal treatment regime as the sequence of treatments that maximizes the expected payoff for an individual, assuming, without loss of generality, that higher payoffs are preferable. Accordingly, the optimal treatments at stage 2 and stage 1 are defined as
\begin{equation}\label{eq:decision_rule}
    a_{i2}^{\opt}=\underset{a\in\mathcal{A}_2}{\arg\sup}(\mathbb{E}[Y_{i2}\mid a_{i1},a,\bm{z}_{i2},\btheta_2]),\quad
    a_{i1}^{\opt}=\underset{a\in\mathcal{A}_1}{\arg\sup}(\mathbb{E}[Y_{i}\mid a,a_{i2}^\opt,\bm{z}_{i1},\btheta_1]).
\end{equation} 
For the sake of simplicity, we assume that the solution to the optimization problems in \eqref{eq:decision_rule} is unique: this will be the case, with probability 1, for the specification of $f_2$ and $f_1$ considered throughout the paper. A well-known hurdle characterizing DTR models, such as model \eqref{eq:model1}, is that, while, given the assigned treatments $a_{i1}$ and $a_{i2}$, the payoff $Y_{i}=Y_{i}(a_{i1},a_{i2})$ is an observable quantity, $Y_{i}^{\opt}=Y_{i}(a_{i1},a_{i2}^\opt)$ in general is not. $Y_{i}^\opt$ coincides with the observable $Y_{i}(a_{i1},a_{i2})$ when $a_{i2}^\opt=a_{i2}$, and is otherwise counterfactual, highlighting the need for missing data imputation. We propose a novel strategy to handle both observed and counterfactual payoffs. For simplicity, we illustrate it assuming binary actions ($\mathcal{A}_1 = \mathcal{A}_2 = \{0,1\}$), though it naturally extends to any finite set of treatments, as done in Sections~\ref{PART2} and~\ref{section5}. We assume that, at stage 2, the observed outcomes $Y_{i2}$ and the counterfactual payoffs $\bar{Y}_{i2} = Y_{i2}(a_{i1}, 1 - a_{i2})$ are conditionally independent given $\btheta_2$. Likewise, the overall payoffs $Y_{i}^\opt$ and $\bar{Y}^\opt_{i} = Y_{i}(1 - a_{i1}, a_{i2}^\opt)$ are conditionally independent given $\btheta_1$. We further assume that the regression functions $f_2$ and $f_1$ in \eqref{eq:model1} satisfy the following equivalences:
\begin{equation}\label{eq:alt_opt1}
    a_{i2}^\opt=a_{i2} \iff \frac{\Pr(Y_{i2}\geq\bar{Y}_{i2})}{\Pr(Y_{i2}\leq\bar{Y}_{i2})}> 1,\quad    
    a_{i1}^\opt=a_{i1} \iff \frac{\Pr(Y_{i}^\opt\geq\bar{Y}_{i}^\opt)}{\Pr(Y_{i}^\opt\leq\bar{Y}_{i}^\opt)}> 1.
\end{equation}
The validity of such an assumption is key and will be discussed in Section~\ref{sec:reg}. The double implications in \eqref{eq:alt_opt1} suggest an alternative 
strategy to implement the optimal rule \eqref{eq:decision_rule}: rather than maximizing the expected values in \eqref{eq:decision_rule}, a stage-specific optimal treatment can be identified as the one that is more likely to lead to a larger payoff. To implement this idea, we introduce 
the augmenting random vectors $V_{i2}^{(0:1)}=(V_{i2}^{(0)},V_{i2}^{(1)})$ and $V_{i}^{(0:1)}=(V_{i}^{(0)},V_{i}^{(1)})$, referred to as stage 2 and overall pseudo-outcomes, which, conditionally on $\btheta_2$ and $\btheta_1$, are independent of and 
distributed as $(Y_{i2}(a_{i1},0),Y_{i2}(a_{i1},1))$ and $(Y_{i}(0,a_{i2}^{\opt}),Y_{i}(1,a_{i2}^{\opt}))$, respectively. The pseudo-outcomes can be interpreted as vectors of potential payoffs of individuals with identical covariate information as those for which observations were actually collected. 
We thus define the random variables $A_{i2}^\opt=\mathbbm{1}{\{V_{i2}^{(1)}>V^{(0)}_{i2}\}}$ and $A_{i1}^\opt=\mathbbm{1}{\{V^{(1)}_{i}>V_{i}^{(0)}\}}$,
and observe that the mode of $A_{ij}^\opt$ is $a_{ij}^\opt$, while its expectation coincides with $\Pr(V_{i2}^{(1)}>V^{(0)}_{i2})$ if $j=2$, and  $\Pr(V_{i}^{(1)}>V^{(0)}_{i})$ if $j=1$. The joint distribution of observations $Y_{i2}$ and pseudo-outcomes $V_{i2}^{(0:1)}$ defines an augmented version of the second-stage regression in \eqref{eq:model1}; likewise, $Y_i^\opt$ and $V_i^{(0:1)}$ define an augmented version of the second regression. Moreover, if we assume that 
    $Y_i^\opt$ is equal in distribution to $Y_{i1}+\max(V_{i2}^{(0:1)})$, 
    we can write the following model:
    \begin{equation}\label{eq:aug_model}
        \begin{aligned}
 (Y_{i2},V_{i2}^{(0:1)})
    \mid a_{i1}, a_{i2},\bm{z}_{i2},\btheta_2 &\simind f_2(y\mid a_{i1},a_{i2},\bm{z}_{i2};\btheta_2)\prod_{a=0}^1 f_2(v_a\mid a_{i1},a,\bm{z}_{i2};\btheta_2)\\
   (Y_{i1},V_{i}^{(0:1)})\mid V_{i2}^{(0:1)},a_{i1},\bm{z}_{i1},\btheta_1 &\simind f_1(y+\max(V_{i2}^{(0:1)})\mid a_{i1},\bm{z}_{i1};\btheta_1) \prod_{a=0}^1
   f_1(v_a \mid a,\bm{z}_{i1};\btheta_1),
    \end{aligned}
    \end{equation}
    completed by assigning a prior $p(\btheta_1,\btheta_2)$ to the regression coefficients. The introduction of pseudo-outcomes leads to an efficient computational scheme for posterior simulation, offering three key advantages: 
i) counterfactual observations are treated as latent random variables and their imputation is seemingly handled; ii) a single computational scheme allows to sample from the joint posterior distribution of all stage-specific parameters, avoiding two-step procedures as in 
\citet{murray2018bayesian}; iii) the uncertainty in identifying individual optimal treatments can be quantified through the posterior distribution of $A_{i2}^\opt$ and $A_{i1}^\opt$.\\
We refer to the use of the model \eqref{eq:aug_model} to address DTR problems as \emph{Bayesian augmented learning} (BAL). The BAL approach handles $T \geq 2$ treatment options by defining $T$-dimensional stage 2 and overall pseudo-outcomes, denoted $V_{i2}^{(0:(T-1))} = (V_{i2}^{(0)}, \ldots, V_{i2}^{(T-1)})$ and $V_i^{(0:(T-1))} = (V_{i}^{(0)}, \ldots, V_{i}^{(T-1)})$, respectively. Key to the BAL model is the specification of the regression functions $f_2$ and $f_1$, along with the prior $p(\btheta_1, \btheta_2)$, which are detailed in Sections~\ref{sec:reg} and~\ref{sec:var_sel}.

\subsection{Regression models}\label{sec:reg}
While BAL offers flexibility in specifying $f_2$ and $f_1$, in view of the application of Section~\ref{section5} we consider continuous payoffs and model them with normal multiple linear regressions, namely
\begin{equation}\label{eq:normal_reg}
\begin{aligned}
f_{2}(y; a_{i1},a_{i2},\bm{z}_{i2},\btheta_2)&=f_{\textsc{N}}(y; \btheta_2^\intercal [\bm{z}_{i2} \times (a_{i1},a_{i2})],\sigma_2^2),\\
     f_{1}(y; a_{i1},\bm{z}_{i1},\btheta_1)&=f_{\textsc{N}}(y; \btheta_1^\intercal [\bm{z}_{i1} \times a_{i1}],\sigma_1^2),
    \end{aligned}
\end{equation}
where $f_{\textsc{N}}(y;\mu,\sigma^2)$ is the density 
of a normal random variable with mean $\mu$ and variance $\sigma^2$, and the notation $[\bm{z} \times \bm{a}]$ indicates the vector of regressors including $\bm{z}$, $\bm{a}$ and their interaction terms. Given the continuity of the payoffs, the equivalences in \eqref{eq:alt_opt1} can be rewritten as
\begin{equation}\label{eq:alt_opt3}
    a_{i2}^\opt=a_{i2} \iff \Pr(Y_{i2}\geq\bar{Y}_{i2})>0.5,\quad  
    a_{i1}^\opt=a_{i1} \iff \Pr(Y_{i}^\opt\geq\bar{Y}_{i}^\opt)>0.5.
\end{equation}
The next proposition formalizes the validity of \eqref{eq:alt_opt1} and thus of \eqref{eq:alt_opt3}. 
\begin{proposition}\label{proposition1}
If $f_2$ and $f_1$ are normal linear regressions, specified as in \eqref{eq:normal_reg}, then the equivalences in \eqref{eq:alt_opt1} hold.
\end{proposition}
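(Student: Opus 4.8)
The plan is to reduce both equivalences in \eqref{eq:alt_opt1} to a statement about the sign of the difference of two conditionally independent normal random variables. Consider first the stage-2 equivalence, and condition throughout on $\btheta_2$, the covariates $\bm{z}_{i2}$, and the actions $a_{i1},a_{i2}$. Under \eqref{eq:normal_reg} the observed payoff $Y_{i2}$ is normal with mean $\mu_{i2}:=\btheta_2^\intercal[\bm{z}_{i2}\times(a_{i1},a_{i2})]$ and variance $\sigma_2^2$, while the counterfactual $\bar{Y}_{i2}=Y_{i2}(a_{i1},1-a_{i2})$ is normal with mean $\bar{\mu}_{i2}:=\btheta_2^\intercal[\bm{z}_{i2}\times(a_{i1},1-a_{i2})]$ and the same variance $\sigma_2^2$ (the model variance does not depend on the action). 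Since $Y_{i2}$ and $\bar{Y}_{i2}$ are conditionally independent given $\btheta_2$, the difference $D_{i2}:=Y_{i2}-\bar{Y}_{i2}$ is normal with mean $\mu_{i2}-\bar{\mu}_{i2}$ and variance $2\sigma_2^2$.

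From here I would chain three elementary equivalences. First, because $\sigma_2^2>0$, $D_{i2}$ is absolutely continuous, so $P(D_{i2}=0)=0$ and hence $P(Y_{i2}\ge\bar{Y}_{i2})+P(Y_{i2}\le\bar{Y}_{i2})=1$; the ratio in \eqref{eq:alt_opt1} therefore exceeds $1$ if and only if $P(Y_{i2}\ge\bar{Y}_{i2})>1/2$, which is precisely \eqref{eq:alt_opt3}. Second, $P(Y_{i2}\ge\bar{Y}_{i2})=\Phi\!\big((\mu_{i2}-\bar{\mu}_{i2})/(\sqrt{2}\,\sigma_2)\big)$, with $\Phi$ the standard normal c.d.f., so $P(Y_{i2}\ge\bar{Y}_{i2})>1/2$ if and only if $\mu_{i2}>\bar{\mu}_{i2}$. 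Third, since $\mathcal{A}_2=\{0,1\}$ and, by the standing uniqueness assumption on \eqref{eq:decision_rule}, the means $\mu_{i2}$ and $\bar{\mu}_{i2}$ differ, the definition of $a_{i2}^\opt$ as the $\arg\sup$ over $\mathcal{A}_2$ gives $a_{i2}^\opt=a_{i2}$ if and only if $\mu_{i2}>\bar{\mu}_{i2}$. Concatenating the three yields the stage-2 equivalence.

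For the stage-1 equivalence I would run the identical argument with $f_1$ in place of $f_2$. The one point needing care is the identification of the conditional means: by backward induction $f_1(\cdot\mid a,\bm{z}_{i1};\btheta_1)$ is, by construction, the conditional law of $Y_i(a,a_{i2}^\opt)$ (the overall payoff under stage-1 action $a$ followed by the optimal stage-2 action), so $\mathbb{E}[Y_i\mid a,a_{i2}^\opt,\bm{z}_{i1},\btheta_1]=\btheta_1^\intercal[\bm{z}_{i1}\times a]$ for $a\in\{0,1\}$, and $Y_i^\opt$ and $\bar{Y}_i^\opt=Y_i(1-a_{i1},a_{i2}^\opt)$ are conditionally independent normals with common variance $\sigma_1^2$ and means $\nu_i:=\btheta_1^\intercal[\bm{z}_{i1}\times a_{i1}]$ and $\bar{\nu}_i:=\btheta_1^\intercal[\bm{z}_{i1}\times(1-a_{i1})]$. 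Exactly as before, the ratio in \eqref{eq:alt_opt1} exceeds $1$ iff $P(Y_i^\opt\ge\bar{Y}_i^\opt)>1/2$ iff $\nu_i>\bar{\nu}_i$, while the definition of $a_{i1}^\opt$ in \eqref{eq:decision_rule} gives $a_{i1}^\opt=a_{i1}$ iff $\nu_i>\bar{\nu}_i$, which closes the argument.

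I do not anticipate a genuine obstacle: the computation is routine once one notices that the difference of the two conditionally independent, equal-variance normals is itself normal. The only places to be attentive are (i) invoking $\sigma_j^2>0$ to make the tie event null, so that the ratio condition in \eqref{eq:alt_opt1} collapses to a single one-sided probability exceeding $1/2$, and (ii) stating the backward-induction identification of the stage-1 conditional means explicitly, so that the $\arg\sup$ in \eqref{eq:decision_rule} is literally a comparison of the two linear predictors of $f_1$.
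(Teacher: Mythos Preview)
Your proposal is correct and follows essentially the same route as the paper's proof: reduce the ratio condition in \eqref{eq:alt_opt1} to the one-sided inequality $P(Y_{i2}\ge\bar{Y}_{i2})>1/2$ via continuity, observe that the difference of the two conditionally independent normals is $\textsc{N}(\mu_{i2}-\bar{\mu}_{i2},2\sigma_2^2)$, and use $\Phi(x)>1/2\iff x>0$ together with the definition \eqref{eq:decision_rule}. Your treatment is in fact slightly more explicit than the paper's in flagging the positivity of $\sigma_j^2$, the uniqueness assumption, and the backward-induction identification of the stage-1 means.
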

While normal linear regression is a convenient and widely used choice in the DTR literature \citep[see, e.g.,][]{chakraborty2014dynamic}, we note that the applicability of the BAL approach is broader, as the equivalences in \eqref{eq:alt_opt1} also hold for other specifications of the regression functions $f_2$ and $f_1$, e.g. logistic and Poisson, which are 
relevant in a variety of applications. More details are provided in Appendix \ref{sec:appA}.

\subsection{Multi-stage variable selection}
\label{sec:var_sel}
For the sake of compact notation, we introduce $\tilde{\bm{z}}_{ij}$ to denote the $d_j$-dimensional vector of regressors at the $j$th stage for the $i$th individual, including both the covariates $\bm{z}_{ij}$ and the corresponding interactions with current and past treatments. The focus of our work is on data, such as the MIMIC-III data, 
where the number of individual covariates at each stage is large. This emphasizes the need for dimensionality reduction, especially when interaction terms are incorporated into the regressions, as in \eqref{eq:normal_reg}. We propose to achieve this by defining a joint spike-and-slab prior for the regression coefficients $\btheta_1$ and $\btheta_2$, which we name \emph{dependent spike-and-slab} (DSS). Our proposal takes into account that some regressors, and similarly some interactions between regressors and stage-specific actions, might appear in both $\tilde{\bm{z}}_{i1}$ and $\tilde{\bm{z}}_{i2}$. Without loss of generality, we assume the common elements are in the first $d\leq\min( d_1, d_2)$ positions of $\tilde{\bm{z}}_{i1}$ and $\tilde{\bm{z}}_{i2}$. Although their values may vary across stages, we adopt a joint model for the associated regression coefficients, based on the expectation that the same covariate may play similar roles within the two regression models. To achieve this, we design a prior that induces positive correlation between the inclusion variables for the same regressor across both stages. The idea of introducing dependence among spike-and-slab priors was recently explored by \citet{Zha23}; however, our formulation is novel in specifically addressing the multi-stage structure of DTRs. Among the various spike-and-slab prior specifications discussed in the literature \citep[see][]{Tad21}, we opt for absolutely continuous spikes and model both spike and slab components as normal distributions, with the spike variance set to a small fraction $r$ of the slab variance. 
We note, however, that the structure of the DSS prior can be extended to alternative spike-and-slab specifications, our choice being driven primarily by the computational convenience of the absolutely continuous formulation.
We let $\delta_{lj}$, with $l=1,\ldots,d_j$, be an indicator variable such that $\delta_{lj}=1$ ($\delta_{lj}=0$) if the $l$th regressor is (is not) included in the $j$th regression, and call $w_{lj}=\Pr(\delta_{lj}=1)$. We assume the prior variance of the regression coefficient $\theta_{lj}$ is equal to $\psi_{lj}$ if $\delta_{lj}=1$, and equal to $r\psi_{lj}$ if $\delta_{lj}=0$. We introduce the notation $\bdelta_{j}=(\delta_{1j},\ldots,\delta_{d_j j})$, $\bpsi_{j}=(\psi_{1j},\cdots,\psi_{d_j j})$ and $\bw_j=(w_{1j},\ldots,w_{pj})$, and define the DSS prior $p(\btheta_1,\btheta_2)$ as the distribution arising from the following hierarchy: 
\begin{align}\label{eq:ss_prior1}
    (\btheta_{1},\btheta_{2})\mid \bdelta_{1},\bdelta_{2},\bpsi_{1},\bpsi_{2} &\sim\prod_{j=1}^2\prod_{l_j=1}^{d_j}\textsc{N}\left(0,r^{1-\delta_{l_jj}}\psi_{l_jj}\right)&\notag\\(\bpsi_{1},\bpsi_{2})&\sim\prod_{j=1}^2\prod_{l_j=1}^{d_j}\textsc{inv-Gamma}(\nu,Q)\notag\\
    (\bdelta_{1},\bdelta_{2})\mid \bw_1, \bw_2 &\sim \prod_{j=1}^2\prod_{l_j=1}^{d_j}\textsc{Bern}(w_{l_jj})&\\
    w_{l1}=w_{l2}&\simiid \textsc{Beta}(a,b) & l=1,\dots,d \notag\\
    w_{lj}&\simiid \textsc{Beta}(a,b) & l_j=d+1,\dots,d_j, \;j=1,2.\notag
 \end{align}
The distribution in \eqref{eq:ss_prior1} is an extension to the case of two vectors of regression coefficients of the normal mixture of inverse gamma spike-and-slab prior proposed by \citet[][]{Ish05}. The indicator variables of common regressors, that is $(\delta_{l1},\delta_{l2})$ for $l=1,\ldots,d$, are marginally dependent, which 
favors borrowing of information across the two stages when performing variable selection. The DSS prior thus formalizes the idea that there might be positive correlation between the significance of the same regressor across stages. Following \eqref{eq:ss_prior1}, $\delta_{l_1 1}$ and $\delta_{l_2 2}$ are uncorrelated unless $l_1 = l_2 \in \{1, \ldots, d\}$, in which case
\begin{equation}\label{eq:corr} \rho(\delta_{l_11},\delta_{l_22})=
        \frac{1}{1+a+b}.
\end{equation}

\subsection{Posterior computation}\label{section3}
The joint distribution of observations, pseudo-outcomes, and parameters is defined by combining the BAL model in \eqref{eq:aug_model} with the regression functions in \eqref{eq:normal_reg}, the DSS prior in \eqref{eq:ss_prior1}, and independent normal priors for $(\sigma_1^2, \sigma_2^2)$. The full conditional distributions of all model components, derived in Appendix \ref{sec:FC}, 
form the basis of a Gibbs sampling algorithm for posterior inference, with steps outlined in Algorithm~\ref{algo} 
in the Appendix. We note that, while the inclusion of the pseudo-outcomes primarily serves to facilitate posterior sampling, Algorithm~\ref{algo} 
also enables direct sampling from their posterior distribution. As a by-product, the probability $\mathbb{E}[A_{ij}^\opt]$ that the optimal treatment at the $j$th stage for the $i$th individual is equal to 1 can be estimated with the frequency of the event $A^\opt_{ij}=1$, as defined in \eqref{eq:alt_opt1}, in the posterior sample. The corresponding estimated individual optimal treatment $\hat{a}_{ij}^\opt$ is given by the mode of the posterior distribution of $A_{ij}^\opt$.

\section{Simulation study}\label{section4}
We conducted an extensive simulation study to assess the performance of the BAL model. We compared two of its variants---BAL with normal regression functions \eqref{eq:normal_reg} and the DSS prior \eqref{eq:ss_prior1} (variant referred to as DSS), and with independent spike-and-slab priors (variant referred to as ISS)---against three benchmark methods commonly used in the frequentist literature. Namely, i)  Q-learning with lasso penalty \citep{qian2011performance}; ii) Q-learning with random forests (RF) \citep{min2022q}; iii) augmented outcome-weighted learning (AOWL) with lasso penalty \citep{zhao2012estimating,liu2018augmented}. The latter is not designed for variable selection and was used solely to estimate optimal treatment regimes. \\  
The DSS prior is specified by setting $d = d_1$, implying that the number of common regressors across stages is equal to the number of regressors in stage 1. The ISS specification is obtained by omitting common regressors across stages in the DSS prior, i.e., by setting $d = 0$ in \eqref{eq:ss_prior1}. For both DSS and ISS, we set $\nu = 3$, $Q = 4$, and assign independent inverse gamma priors to $a$ and $b$, with shape and scale parameters equal to 1. The ratio $r$ is set to $0.001$, and a sensitivity analysis in Appendix \ref{sec:sensitivity} suggests that BAL is robust to this choice.
Throughout this study, data are analyzed using DSS and ISS by running Algorithm \ref{algo} 
for 10,000 iterations, with the first 5,000 discarded as burn-in. As shown in Figure \ref{fig_traceplot} 
in the Appendix, standard diagnostic tools were applied and revealed no evidence of convergence or mixing issues. Implementation details for the alternative methods considered are provided in Appendix~\ref{sec:alt_meth}. 
Our investigation focuses on the models' ability to accurately identify significant regressors across stages, and on the effect of variable selection on identifying the optimal DTR. The measures used to evaluate these aspects are detailed in Appendix \ref{sec:measures}. 
The simulation study consists of three experiments: Section~\ref{PART1} explores scenarios with different numbers of stage-specific regressors, and varying degrees of correlation between the significance of common regressors across stages; Section~\ref{PART2} focuses on the analysis of simulated datasets characterized by varying number of treatments, thus extending the framework described in Section~\ref{section2} to sets $\mathcal{A}_1$ and $\mathcal{A}_2$ with cardinality larger than two. A third experiment, focusing on data generated via a mechanism that goes beyond linearity, is presented in Appendix~\ref{sec:nonlin}.

\subsection{First Experiment}
\label{PART1}
We consider data generated according to the mechanism detailed in Appendix \ref{sec:dgp1}, 
an adaptation of the class of generative models proposed by \cite{chakraborty2010inference}.  The simulation study includes nine simulation scenarios, defined by the combination of three values for the number of main effects, $k\in\{10,20,30\}$ (corresponding to $d_1\in\{22,42,62\}$ and $d_2\in \{33,63,93\}$, when also counting treatment effects), and three levels of correlation $\rho^*=\rho(\delta_{l1}^*,\delta_{l2}^*)$, for $l\in\{1,\ldots,d_1\}$: weak ($0.3$), medium ($0.6$) and strong ($0.9$). The variables $\delta_{i1}^*$ and $\delta_{i2}^*$ indicate whether, under the data generating process, the $l$th predictor is significant at the $j$th stage ($\delta_{lj}^* = 1$) or not ($\delta_{lj}^* = 0$). We consider datasets simulated under two regimes: $n$ smaller than $d_2$ and $n$ larger than $d_2$. The sample sizes are $n \in \{25, 50\}$ for $k=10$, $n \in \{50, 100\}$ for $k=20$, and $n \in \{75, 150\}$ for $k=30$. Larger sample sizes meet the condition $n > d_2 > d_1$, while smaller samples are more challenging because $d_1 < n < d_2$, resulting in more regressors than observations at stage 2. For each scenario, 100 datasets are generated independently.

\begin{table}[h]
\caption{First experiment: simulated data with $n$ smaller than $d_2$. For stage 2 and stage 1, three summaries of the accuracy in selecting significant variables (proportion of FN, proportion of FP, and F$_1$ score), for DSS, ISS, Q-learning with lasso (QL), and Q-learning with RF (QRF), across the nine scenarios obtained by specifying $(k,n)\in\{(10,25),(20,50),(30,75)\}$  and $\rho^*\in\{0.3,0.6,0.9\}$. Results are based on 100 replicated datasets.}
\centering
\resizebox{1\textwidth}{!}{
\begin{tabular}{ccccccccccccccccccc}
\Hline
\multicolumn{18}{c}{Stage 2}                                                                                                                         \\ \hline
\multirow{2}{*}{$n$} & \multirow{2}{*}{$d_2$} & \multirow{2}{*}{$\rho^*$} &  & \multicolumn{4}{c}{FN} &  & \multicolumn{4}{c}{FP} &  & \multicolumn{4}{c}{F$_1$ score} \\ \cline{5-8} \cline{10-13} \cline{15-18} 
 &  &  & & DSS    & ISS   & QL  &QRF   &  & DSS    & ISS   & QL  &QRF   &  & DSS      & ISS     & QL   &QRF   \\ \hline 
 \multirow{3}{*}{25}            & \multirow{3}{*}{33}
& 0.3 &  &0.158 & 0.196 & 0.430 &0.467  & & 0.247 & 0.178 & 0.163 & 0.211  &  & 0.704 & 0.733 & 0.532 & 0.512  \\
   &    & 0.6 &  &0.123 & 0.182 & 0.382 &0.457  & & 0.217 & 0.160 & 0.181 & 0.203  &  & 0.744 & 0.753 & 0.564 & 0.513  \\
   &    & 0.9 &  &0.131 & 0.204 & 0.386 &0.469  & & 0.180 & 0.154 & 0.172 & 0.217  &  & 0.770 & 0.740 & 0.567 & 0.500  \\[5pt]
 \multirow{3}{*}{50}            & \multirow{3}{*}{63} & 0.3 &  &0.039 & 0.060 & 0.233 &0.482  & & 0.237 & 0.154 & 0.206 & 0.216  &  & 0.788 & 0.841 & 0.666 & 0.501  \\
   &    & 0.6 &  &0.038 & 0.062 & 0.242 &0.489  & & 0.192 & 0.141 & 0.207 & 0.215  &  & 0.820 & 0.847 & 0.659 & 0.499  \\
   &    & 0.9 &  &0.025 & 0.049 & 0.239 &0.499  & & 0.169 & 0.178 & 0.204 & 0.219  &  & 0.859 & 0.836 & 0.666 & 0.491  \\[5pt]
 \multirow{3}{*}{75}            & \multirow{3}{*}{93} & 0.3 &  &0.016 & 0.019 & 0.137 &0.514 & & 0.185 & 0.098 & 0.241 & 0.223  &  & 0.832 & 0.904 & 0.706 & 0.481  \\
   &    & 0.6 &  &0.010 & 0.022 & 0.178 &0.498  & & 0.151 & 0.123 & 0.225 & 0.218  &  & 0.866 & 0.884 & 0.691 & 0.494  \\
   &    & 0.9 &  &0.012 & 0.018 & 0.192 &0.492  & & 0.098 & 0.130 & 0.212 & 0.217  &  & 0.915 & 0.888 & 0.690 & 0.497 \\ \hline
\multicolumn{18}{c}{Stage 1}                                                                                                                               \\ \hline
\multirow{2}{*}{$n$} & \multirow{2}{*}{$d_1$} & \multirow{2}{*}{$\rho^*$} &  & \multicolumn{4}{c}{FN} &  & \multicolumn{4}{c}{FP} &  & \multicolumn{4}{c}{F$_1$ score} \\ \cline{5-8} \cline{10-13} \cline{15-18} 
 &  &  & & DSS    & ISS   & QL  &QRF   &  & DSS    & ISS   & QL  &QRF   &  & DSS      & ISS     & QL   &QRF       \\ \hline
 \multirow{3}{*}{25}            & \multirow{3}{*}{22} & 0.3 &  & 0.192 & 0.348 & 0.395 &0.457  & & 0.327 & 0.137 & 0.190 & 0.231 &  & 0.624 & 0.644 & 0.541 & 0.496 \\
   &    & 0.6 &  & 0.140 & 0.294 & 0.378 &0.409 & & 0.244 & 0.113 & 0.190 & 0.203  &  & 0.719 & 0.709 & 0.570 & 0.551 \\
   &    & 0.9 &  & 0.085 & 0.295 & 0.327 &0.379  & & 0.168 & 0.115 & 0.185 & 0.207  &  & 0.806 & 0.693 & 0.597 & 0.555  \\[5pt]
 \multirow{3}{*}{50}            & \multirow{3}{*}{42} & 0.3 &  & 0.072 & 0.091 & 0.216 &0.462  & & 0.322 & 0.164 & 0.227 & 0.218  &  & 0.705 & 0.819 & 0.667 & 0.513  \\
   &    & 0.6 &  & 0.045 & 0.106 & 0.245 &0.437  & & 0.218 & 0.151 & 0.184 & 0.207  &  & 0.792 & 0.821 & 0.681 & 0.539  \\
   &    & 0.9 &  & 0.022 & 0.079 & 0.214 &0.444 & & 0.158 & 0.198 & 0.205 & 0.208  &  & 0.870 & 0.822 & 0.686 & 0.533  \\[5pt]
 \multirow{3}{*}{75}            & \multirow{3}{*}{62} & 0.3 &  & 0.033 & 0.040 & 0.123 &0.464  & & 0.271 & 0.087 & 0.232 & 0.211 &  & 0.754 & 0.910 & 0.730 & 0.521  \\
   &    & 0.6 &  & 0.013 & 0.036 & 0.163 &0.457  & & 0.183 & 0.136 & 0.204 & 0.208  &  & 0.838 & 0.881 & 0.719 & 0.527  \\
   &    & 0.9 &  & 0.007 & 0.034 & 0.156 &0.444 & & 0.093 & 0.139 & 0.213 & 0.206 &  & 0.920 & 0.887 & 0.712 & 0.533  \\ \hline
\end{tabular}
}
\label{TABLE first experiment_1}
\end{table}
Table \ref{TABLE first experiment_1} summarizes the performance of the considered methods, excluding AOWL, in identifying significant variables for data simulated under the regime where $n < d_2$. The F$_1$ scores clearly indicate that the two BAL variants outperform all competing approaches at both stages and across all scenarios. In particular, DSS tends to perform better when the data are generated under strong correlation, whereas ISS is preferable when the correlation is weak. Among the alternative methods, Q-learning with lasso consistently outperforms Q-learning with random forests. The latter appears to struggle with the moderate sample sizes considered in this study. These differences across the five approaches are further reflected in the false negative (FN) and false positive (FP) rates, with DSS achieving the lowest FN rate in most scenarios, and ISS attaining the lowest FP rate.
\begin{table}[H]
\centering
\caption{First experiment: simulated data with $n$ smaller than $d_2$. For stage 2, stage 1 and overall, two measures of prediction accuracy (MAE and ER), for DSS, ISS, Q-learning with lasso (QL), Q-learning with RF (QRF) and AOWL, across the nine scenarios obtained by specifying $(k,n)\in\{(10,25),(20,50),(30,75)\}$ and $\rho^*\in\{0.3,0.6,0.9\}$. Results are based on 100 replicated datasets.}
\resizebox{1\textwidth}{!}{
\begin{tabular}{ccccccccccccccc}
\Hline
\multicolumn{15}{c}{Stage 2}                                                                                \\ \hline
         \multirow{2}{*}{$n$} &   \multirow{2}{*}{$d_2$}                               &   \multirow{2}{*}{$\rho^*$}   & & \multicolumn{5}{c}{MAE}                                            &                              & \multicolumn{5}{c}{ER}                                                           \\ \cline{5-9} \cline{11-15} 
 &   & & & DSS                             & ISS                           & QL     &QRF & AOWL  &                    & DSS                           & ISS                           & QL    &QRF & AOWL                        \\ \hline
\multicolumn{1}{c}{}                                   & \multicolumn{1}{c}{}                            & 0.3  & & 0.248 & 0.274 & 0.648 & 0.935 & 0.911 & & 0.136 & 0.144 & 0.219 & 0.303 & 0.288     \\
\multicolumn{1}{c}{}                                   & \multicolumn{1}{c}{}  & 0.6 & & 0.260 & 0.264 & 0.591 & 0.881 & 0.892 & & 0.130 & 0.130 & 0.219 & 0.296 & 0.284 \\
\multicolumn{1}{c}{\multirow{-3}{*}{25}}               & \multicolumn{1}{c}{\multirow{-3}{*}{33}}     & 0.9 & & 0.307 & 0.339 & 0.790 & 0.909 & 0.882 & & 0.148 & 0.160 & 0.264 & 0.302 & 0.282 \\[5pt]
\multicolumn{1}{c}{}                                   & \multicolumn{1}{c}{} & 0.3 & & 0.151 & 0.145 & 0.459 & 1.294 & 1.174 & & 0.087 & 0.084 & 0.162 & 0.302 & 0.280 \\
\multicolumn{1}{c}{}                                   & \multicolumn{1}{c}{}   & 0.6 & & 0.179 & 0.191 & 0.499 & 1.279 & 1.216 & & 0.079 & 0.083 & 0.152 & 0.285 & 0.283 \\
\multicolumn{1}{c}{\multirow{-3}{*}{50}}               & \multicolumn{1}{c}{\multirow{-3}{*}{63}}  & 0.9 & & 0.134 & 0.168 & 0.500 & 1.346 & 1.178 & & 0.076 & 0.086 & 0.162 & 0.298 & 0.279\\ [5pt]
\multicolumn{1}{c}{}                                   & \multicolumn{1}{c}{}  & 0.3 & & 0.084 & 0.079 & 0.325 & 1.666 & 1.317 & & 0.055 & 0.053 & 0.114 & 0.300 & 0.269 \\
\multicolumn{1}{c}{}                                   & \multicolumn{1}{c}{}  & 0.6 & & 0.064 & 0.099 & 0.501 & 1.613 & 1.310 & & 0.050 & 0.058 & 0.135 & 0.297 & 0.261 \\
\multicolumn{1}{c}{\multirow{-3}{*}{75}}               & \multicolumn{1}{c}{\multirow{-3}{*}{93}} & 0.9 & & 0.069 & 0.114 & 0.466 & 1.633 & 1.329 & & 0.049 & 0.060 & 0.144 & 0.306 & 0.272 \\   \hline
\multicolumn{15}{c}{Stage 1} 
\\
\hline
\multirow{2}{*}{$n$} & \multirow{2}{*}{$d_1$} & \multirow{2}{*}{$\rho^*$} & & \multicolumn{5}{c}{MAE}                                       &                                   & \multicolumn{5}{c}{ER}                                                                        \\ \cline{5-9} \cline{11-15} 
 &   & & & DSS                             & ISS                           & QL     &QRF & AOWL  &                    & DSS                           & ISS                           & QL    &QRF & AOWL                                            \\ \hline
 \multicolumn{1}{c}{} & \multicolumn{1}{c}{} & 0.3 & & 0.463 & 0.495 & 0.823 & 0.924 & 1.043 & & 0.183 & 0.190 & 0.267 & 0.292 & 0.316  \\
\multicolumn{1}{c}{} & \multicolumn{1}{c}{} & 0.6 & & 0.386 & 0.441 & 0.794 & 0.865 & 0.911 & & 0.179 & 0.196 & 0.256 & 0.271 & 0.284  \\
 \multicolumn{1}{c}{\multirow{-3}{*}{25}} & \multicolumn{1}{c}{\multirow{-3}{*}{22}}  & 0.9 & & 0.332 & 0.404 & 0.633 & 0.840 & 0.906 & & 0.175 & 0.193 & 0.239 & 0.293 & 0.294 \\ [5pt]
 \multicolumn{1}{c}{} & \multicolumn{1}{c}{} & 0.3 & & 0.324 & 0.362 & 0.574 & 1.312 & 1.231 & & 0.129 & 0.131 & 0.186 & 0.298 & 0.290  \\
\multicolumn{1}{c}{} & \multicolumn{1}{c}{} & 0.6 & & 0.266 & 0.333 & 0.617 & 1.300 & 1.216 & & 0.112 & 0.131 & 0.185 & 0.296 & 0.293  \\
 \multicolumn{1}{c}{\multirow{-3}{*}{50}} & \multicolumn{1}{c}{\multirow{-3}{*}{42}}  & 0.9 & & 0.214 & 0.308 & 0.611 & 1.321 & 1.127 & & 0.094 & 0.122 & 0.176 & 0.291 & 0.270  \\ [5pt]
 \multicolumn{1}{c}{} & \multicolumn{1}{c}{} & 0.3 & & 0.185 & 0.190 & 0.416 & 1.783 & 1.371 & & 0.088 & 0.086 & 0.136 & 0.332 & 0.284  \\
\multicolumn{1}{c}{} & \multicolumn{1}{c}{} & 0.6 & & 0.136 & 0.199 & 0.571 & 1.634 & 1.383 & & 0.071 & 0.084 & 0.151 & 0.294 & 0.274  \\
 \multicolumn{1}{c}{\multirow{-3}{*}{75}} & \multicolumn{1}{c}{\multirow{-3}{*}{62}}  & 0.9 & & 0.119 & 0.191 & 0.469 & 1.575 & 1.340 & & 0.063 & 0.082 & 0.142 & 0.301 & 0.276 \\ \hline
\multicolumn{15}{c}{Overall}                                                             \\ \hline
\multirow{2}{*}{$n$} & \multirow{2}{*}{$(d_1,d_2)$} & \multirow{2}{*}{$\rho^*$} & &  \multicolumn{5}{c}{MAE}                                        &                                  & \multicolumn{5}{c}{ER}                                                                        \\ \cline{5-9} \cline{11-15}                                          &   & & & DSS                             & ISS                           & QL     &QRF & AOWL  &                    & DSS                           & ISS                           & QL    &QRF & AOWL                                        \\ \hline
\multicolumn{1}{c}{}                                   & \multicolumn{1}{c}{}                         & 0.3 &      & 0.711 & 0.770 & 1.472 &1.860&1.954	&& 0.290 & 0.299 & 0.418 & 0.501&0.507                \\   
\multicolumn{1}{c}{}                                   & \multicolumn{1}{c}{}                            & 0.6 &     & 0.646 & 0.706 & 1.384 &1.746&1.803&& 0.287 & 0.304 & 0.418 &0.476&0.490         \\
\multicolumn{1}{c}{\multirow{-3}{*}{25}}               & \multicolumn{1}{c}{\multirow{-3}{*}{(22, 33)}}    & 0.9 &     & 0.639 & 0.742 & 1.423 &1.750&1.788&& 0.288 & 0.314 & 0.435  &0.504  &0.484           \\[5pt]
\multicolumn{1}{c}{}                                   & \multicolumn{1}{c}{}                            & 0.3 &       & 0.475 & 0.507 & 1.033 &2.610&2.405&& 0.205 & 0.203 & 0.315 &0.508&0.483     \\
\multicolumn{1}{c}{}                                   & \multicolumn{1}{c}{}                            & 0.6 &       & 0.445 & 0.524 & 1.116 & 2.579 & 2.432 & & 0.179 & 0.202 & 0.301 & 0.494 & 0.490     \\
\multicolumn{1}{c}{\multirow{-3}{*}{50}}               & \multicolumn{1}{c}{\multirow{-3}{*}{(42, 63)}}  & 0.9 &        & 0.348 & 0.476 & 1.111 & 2.667 & 2.304 && 0.162 & 0.196 & 0.306 & 0.496 & 0.473             \\[5pt]
\multicolumn{1}{c}{}                                   & \multicolumn{1}{c}{}                            & 0.3 &        & 0.269 & 0.269 & 0.740 & 3.451 & 2.688 & & 0.136 & 0.133 & 0.231 & 0.529 & 0.474         \\
\multicolumn{1}{c}{}                                   & \multicolumn{1}{c}{}                            & 0.6 &          & 0.200 & 0.298 & 1.073 & 3.248 & 2.693 && 0.116 & 0.135 & 0.261 & 0.507&  0.464     \\
\multicolumn{1}{c}{\multirow{-3}{*}{75}}               & \multicolumn{1}{c}{\multirow{-3}{*}{(62, 93)}} & 0.9 &        & 0.189 & 0.305 & 0.934 & 3.208 & 2.669 && 0.109 & 0.135 & 0.260 & 0.514& 0.474  \\ \hline 
\end{tabular}
}
\label{TABLE first experiment_2}
\end{table}
Table~\ref{TABLE first experiment_2} summarizes the empirical performance of the five methods by comparing the estimated optimal treatment regimes to the true ones, under the regime where $n < d_2$. Across all scenarios and stages, the two BAL variants outperform the competing methods in terms of treatment error rate (ER) and payoff mean absolute error (MAE). DSS performs as well as or better than ISS, with a more marked advantage in settings characterized by strong correlation $\rho^*$. Among the alternative approaches, Q-learning with RF and AOWL consistently exhibit the weakest performance. Overall performance measures, which aggregate results across stages, reinforce these findings. The results of a similar analysis, this time based on data simulated under the regime where $n > d_2$, are reported in Tables~\ref{TABLE first experiment_1_1} 
and \ref{TABLE first experiment_2_2} 
in the Appendix. A comparison of the two regimes highlights differences in method performance. 
When $n > d_2$, DSS and ISS exhibit similar behavior, particularly with respect to ER and MAE. This suggests that the advantages of the DSS prior over ISS are more pronounced when the sample size is small relative to the number of regressors. Q-learning with RF and AOWL continue to underperform, whereas Q-learning with lasso performs only slightly worse than the two BAL variants, particularly in terms of ER and MAE. 

\subsection{Second experiment}\label{PART2}
In the second experiment, we set $\mathcal{A}_1 = \mathcal{A}_2 = \{0, 1, \ldots, T - 1\}$ and consider four simulation scenarios, defined by varying the number of treatments, $T \in \{4, 8\}$, and the sample size, $n \in \{200, 400\}$. Here, the number of individual covariates is fixed at $k = 10$, and data are generated with $\rho^* = 0.6$ using a data-generating process similar to that of the first experiment, as described in Appendix \ref{sec:dgp2}. 
Given that the number of main effects is $k = 10$ and that $d_1=T(k+1)$ and $d_2=(2T-1)(k+1)$, we have $(d_1, d_2) = (44, 77)$ when $T = 4$ and $(d_1, d_2) = (88,165)$ when $T = 8$. All scenarios considered in this experiment fall within the regime where $n > d_2$. For each scenario, we independently generate 100 datasets. AOWL is excluded from this second experiment because the \texttt{DTRlearn2} R package only supports data with binary treatments.
\begin{table}[h]
\centering
\caption{Second experiment. For stage 2 and stage 1, three summaries of the accuracy in selecting significant variables (proportion of FN, proportion of FP, and F$_1$ score), for DSS, ISS, Q-learning with lasso (QL), and Q-learning with RF (QRF), across the four scenarios obtained by specifying $n\in\{200,400\}$ and $T\in\{4,8\}$. The number of individual covariates is $k=10$, while $\rho^*=0.6$. Results are based on 100 replicated datasets.}
\resizebox{1\textwidth}{!}{ 
\begin{tabular}{cccccccccccccccccc}
\Hline
\multicolumn{18}{c}{Stage 2} \\ \hline
\multirow{2}{*}{$T$} & \multirow{2}{*}{$d_2$} & \multirow{2}{*}{$n$} &  & \multicolumn{4}{c}{FN} &  & \multicolumn{4}{c}{FP} &  & \multicolumn{4}{c}{F$_1$ score} \\ \cline{5-8} \cline{10-13} \cline{15-18} 
 &  &  & & DSS & ISS & QL & QRF &   & DSS & ISS & QL & QRF &  & DSS & ISS & QL & QRF  \\ \hline
\multirow{2}{*}{4}   & \multirow{2}{*}{99}    & 200 &  & 0.003 & 0.004 & 0.015 & 0.344  &  & 0.025 & 0.018 & 0.124 & 0.171  &  & 0.972 & 0.978 & 0.873 & 0.626  \\
                     &                        & 400 &  & 0.002 & 0.003 & 0.007 & 0.306  &  & 0.009 & 0.005 & 0.085 & 0.151  &  & 0.989 & 0.993 & 0.913 & 0.672  \\[5pt]
\multirow{2}{*}{8}   & \multirow{2}{*}{187}   & 200 &  & 0.006 & 0.010 & 0.023 & 0.467 &  & 0.084 & 0.056 & 0.244 & 0.203 &  & 0.914 & 0.938 & 0.780 & 0.533  \\
                     &                        & 400 &  & 0.003 & 0.004 & 0.010 & 0.382 &  & 0.026 & 0.021 & 0.102 & 0.168  &  & 0.970 & 0.975 & 0.895 & 0.614 \\ \hline
\multicolumn{18}{c}{Stage 1} \\ \hline
\multirow{2}{*}{$T$} & \multirow{2}{*}{$d_1$} & \multirow{2}{*}{$n$} &  & \multicolumn{4}{c}{FN} &  & \multicolumn{4}{c}{FP} &  & \multicolumn{4}{c}{F$_1$ score} \\ \cline{5-8} \cline{10-13} \cline{15-18} 
 &  &  & & DSS & ISS & QL & QRF &   & DSS & ISS & QL & QRF &  & DSS & ISS & QL & QRF \\ \hline
\multirow{2}{*}{4}   & \multirow{2}{*}{55}    & 200 &  & 0.004 & 0.008 & 0.017 & 0.334 &  & 0.049 & 0.002 & 0.093 & 0.173 &  & 0.948 & 0.993 & 0.899 & 0.634 \\
                     &                        & 400 &  & 0.002 & 0.003 & 0.007 & 0.291  &  & 0.020 & 0.003 & 0.077 & 0.151 &  & 0.977 & 0.995 & 0.922 & 0.677  \\[5pt]
\multirow{2}{*}{8}   & \multirow{2}{*}{99}    & 200 &  & 0.009 & 0.012 & 0.036 & 0.425  &  & 0.141 & 0.029 & 0.160 & 0.195 &  & 0.854 & 0.964 & 0.835 & 0.558\\
                     &                        & 400 &  & 0.004 & 0.007 & 0.017 & 0.364  &  & 0.054 & 0.003 & 0.060 & 0.165  &  & 0.943 & 0.994 & 0.932 & 0.626 \\ \hline
\end{tabular}
}
\label{TABLE second_1}
\end{table}
When analyzing datasets with more than two treatments, the advantages of two two BAL variants are confirmed, compared to both versions of Q-learning.
Table \ref{TABLE second_1} summarizes the performance of the four methods in selecting significant regressors. DSS and ISS perform similarly and outperform the other methods, with DSS exhibiting lower FN rates and ISS lower FP rates across all scenarios. As the sample size increases, all methods show improved performance, reflected in higher F$_1$ scores and reduced FN and FP rates.
\begin{table}[h!]
\centering
\caption{Second experiment. For stage 2, stage 1 and overall, two measures of prediction accuracy (MAE and ER), for DSS, ISS, Q-learning with lasso (QL), and Q-learning with RF (QRF), across the four scenarios obtained by specifying $n\in\{200,400\}$ and $T\in\{4,8\}$. The number of individual covariates is $k=10$, while $\rho^*=0.6$. Results are based on 100 replicated datasets.} 
\begin{tabular}{ccccccccccccc}
\Hline
\multicolumn{13}{c}{Stage 2}\\ 
\hline
\multirow{2}{*}{$T$} & \multirow{2}{*}{$d_2$} & \multirow{2}{*}{$n$} & & \multicolumn{4}{c}{MAE} & & \multicolumn{4}{c}{ER} \\ \cline{5-8} \cline{10-13} 
 & & & & DSS & ISS & QL & QRF  & & DSS & ISS & QL & QRF  \\ \hline
\multirow{2}{*}{4} & \multirow{2}{*}{99} & 200 & & 0.011 & 0.010 & 0.015 & 1.050  & & 0.042 & 0.043 & 0.051 & 0.330  \\
  &    & 400 & & 0.004 & 0.004 & 0.007 & 0.680  & & 0.034 & 0.034 & 0.035 & 0.270 \\[5pt]
\multirow{2}{*}{8} & \multirow{2}{*}{187} & 200 & & 0.055 & 0.053 & 0.089 & 2.933 & & 0.084 & 0.083 & 0.108 & 0.556  \\
  &     & 400 & & 0.013 & 0.013 & 0.015 & 1.994  & & 0.046 & 0.045 & 0.048 & 0.485  \\ \hline
\multicolumn{13}{c}{Stage 1}\\ \hline
\multirow{2}{*}{$T$} & \multirow{2}{*}{$d_1$} & \multirow{2}{*}{$n$} & & \multicolumn{4}{c}{MAE} & & \multicolumn{4}{c}{ER} \\ \cline{5-8} \cline{10-13} 
 & & & & DSS & ISS & QL & QRF  & & DSS & ISS & QL & QRF \\ \hline
\multirow{2}{*}{4} & \multirow{2}{*}{55} & 200 & & 0.014 & 0.012 & 0.024 & 0.968  & & 0.048 & 0.047 & 0.056 & 0.332  \\
  &    & 400 & & 0.006 & 0.005 & 0.006 & 0.611  & & 0.039 & 0.038 & 0.029 & 0.260  \\[5pt]
\multirow{2}{*}{8} & \multirow{2}{*}{99} & 200 & & 0.084 & 0.091 & 0.140 & 2.562  & & 0.105 & 0.107 & 0.128 & 0.550 \\
  &    & 400 & & 0.022 & 0.022 & 0.026 & 1.909 & & 0.057 & 0.057 & 0.059 & 0.473  \\ \hline
\multicolumn{13}{c}{Overall}\\\hline
\multirow{2}{*}{$T$} & \multirow{2}{*}{$(d_1, d_2)$} & \multirow{2}{*}{$n$} & & \multicolumn{4}{c}{MAE} & & \multicolumn{4}{c}{ER} \\ \cline{5-8} \cline{10-13} 
 & & & & DSS & ISS & QL & QRF  & & DSS & ISS & QL & QRF  \\ \hline
\multirow{2}{*}{4} & \multirow{2}{*}{(55, 99)} & 200 & & 0.024 & 0.023 & 0.038 & 2.018  & & 0.089 & 0.088 & 0.104 & 0.553 \\
  &          & 400 & & 0.010 & 0.010 & 0.013 & 1.291  & & 0.071 & 0.071 & 0.063 & 0.456  \\[5pt]
\multirow{2}{*}{8} & \multirow{2}{*}{(99, 187)} & 200 & & 0.138 & 0.144 & 0.229 & 5.495 & & 0.181 & 0.181 & 0.223 & 0.801 \\
  &           & 400 & & 0.035 & 0.034 & 0.042 & 3.903  & & 0.100 & 0.099 & 0.104 & 0.726 \\ \hline
\end{tabular}
\label{TABLE second_2}
\end{table} 
As shown in Table~\ref{TABLE second_2}, the two BAL variants yield the most accurate predictions, achieving lower MAE and ER values than both Q-learning versions across all scenarios. DSS and ISS exhibit similar performance, indicating that the differences between the two approaches become less pronounced at the larger sample sizes considered in this experiment. Increasing the sample size from 200 to 400 improves the predictive performance of all methods.\\
As an additional evaluation of our method, we examined a scenario with imbalanced treatment assignment by making a minor modification to the data-generating process used in the second simulation experiment. As detailed in Appendix \ref{sec:appPOS}, 
the results indicate that the predictive accuracy of the BAL approach is affected by treatment imbalance, while its impact on variable selection appears less systematic. 

\section{Clinical case study}\label{section5}
We analyze a clinical case study that focuses on hospital care medications for patients in intensive care units (ICU) with severe acute arterial hypertension. As outlined by \cite{zhou2022optimal}, data are extracted from the MIMIC-III clinical database using free-text extraction techniques to retrieve relevant information from unstructured medical records.
The estimation of DTRs is of particular interest because personalized antihypertensive treatments have been shown to result in more effective interventions to manage hypertension \citep{kotchen2016ushering}. 
The data we analyze include patients who (i) stayed in the ICU for at least 3 days, (ii) had a first-day maximum systolic blood pressure (SBP) over 180 mmHg, and (iii) received treatment combinations assigned to at least 1\% of the population in both stages. Stage 1 comprised 742 patients admitted for hypertensive crisis; of these, 320 progressed to stage 2 due to persistent hypertension requiring further medication on the second day. Four classes of antihypertensive agents are commonly used to treat hypertension: angiotensin-converting enzyme inhibitors (ACEi), beta-blockers, calcium channel blockers (CCB), and diuretics. Moreover, combination therapies are known to often yield better outcomes than single-drug treatments \citep{wald2009combination,kotchen2016ushering}. Our goal is to identify the optimal two-stage dynamic DTR for antihypertensive combinations based on patients’ SBP management. Among all possible combinations of the aforementioned antihypertensive agents, the analyzed dataset includes only $T=8$ treatment options: four individual agents and four combinations. Their frequencies are detailed in Table~\ref{fren analysis} 
in the Appendix.
A total of $k=12$ clinical factors are considered across both stages: daily maximum SBP, heart rate (HR), oxygen saturation (SpO2), age, weight, gender, race, smoking status, kidney disease, diabetes, chronic obstructive pulmonary disease (COPD), and chronic hypertension (CH). Notably, potential confounders such as SBP and CH are included, consistent with standard confounder adjustment methods \citep[see][]{chakraborty2013statistical}.
Our DTR approach aims to recommend the optimal combination of antihypertensive agents at stage 1, based on patients’ clinical characteristics immediately after ICU admission. For those with maximum SBP exceeding 140 mmHg the next day, it further recommends treatment adjustments at stage 2, considering both the patient’s clinical factors that day and the initial therapy. We use the decrease in maximum SBP after treatment as the response variable. Specifically, for the $i$th subject, we define $Y_i = Y_{i1} + Y_{i2}$, where $Y_{ij}$ denotes the decrease in maximum SBP after the $j$th stage, and $Y_i$ the overall decrease. To accommodate differing patient counts across stages, we introduce the variable $\mathcal{I}_i$, set to 1 if the $i$th patient received treatment on both days, and 0 if treatment was given only on the first day.
 The stage 2 augmented model is defined solely for patients with $\mathcal{I}_i = 1$ and is specified as follows:
\begin{equation*}
\begin{aligned}
               (Y_{i2},V_{i2}^{(0:7)})
        \mid a_{i1}, a_{i2},\bm{z}_{i2},\btheta_2 &\simind f_{\textsc{N}}(y; \btheta_2^\intercal [\bm{z}_{i2}  \times 
(\bm{a}^*_{i1},\bm{a}^*_{i2})],\sigma_2^2)\prod_{t=0}^{7} f_{\textsc{N}}(v_t; \btheta_2^\intercal [\bm{z}_{i2} \times (\bm{a}^*_{i1},\bm{i}_t)],\sigma_2^2),
\end{aligned}
\end{equation*}
where $\bm{a}^*_{ij} = (a_{ij1}^*, \ldots, a_{ij7}^*) \in \mathcal{D}_{7}$ are vectors of dummy variables such that $a_{ijt}^* = \mathbb{I}_{\{t\}}(a_{ij})$, and  $\bm{i}_t \in \mathcal{D}_{7}$ is the null vector if $t = 0$, and is defined as the vector of zeros with a 1 in position $t$, if $t \in \{1, \ldots, 7\}$. Without loss of generality, we set $V_{i2}^{(t)} = 0$ for every $t \in \{0, \ldots, 7\}$ when $\mathcal{I}_i = 0$, which is equivalent to assuming that if the decrease in maximum SBP after stage 1 was effective, no additional decrease is considered at stage 2. In this case, the stage 1 treatment is regarded as optimal and should remain unchanged. This assumption aligns with the analysis proposed by \citet{Liu24}. The stage 1 augmented model is given by
    \begin{equation*}
    \begin{aligned}
    (Y_{i1},V_{i}^{(0:7)})
    \mid V_{i2}^{(0:7)},a_{i1},\bm{z}_{i1},\btheta_1 &\simind f_{\textsc{N}}(y+\max_{t}\{V_{i2}^{(t)}\};\btheta_1^\intercal [\bm{z}_{i1} \times \bm{a}_{i1}^*],\sigma_1^2)\prod_{t=0}^{7}f_{\textsc{N}}(v_t; \btheta_1^\intercal [\bm{z}_{i1} \times \bm{i}_t],\sigma_1^2).
    \end{aligned}
    \end{equation*}
 Finally, we assume a DSS prior for $(\btheta_1, \btheta_2)$. The defined model  includes a total of \( d_2 = (2T - 1)(k + 1) = 195 \) regressors in stage 2 and \( d_1 = T(k + 1) = 104 \) in stage 1. These figures are obtained by counting the \( k = 12 \) main effects, one intercept, and 182 and 91 treatment effects in stage 2 and stage 1, respectively. Moreover, having the same prognostic factors at both stages implies that $d = d_1$. We also note that, including interactions between covariates and the term $a_{i1}a_{i2}$ could be of interest, albeit at the cost of increased computational burden. As a preliminary step, we found that DSS provides a superior fit compared to ISS (see Table~\ref{bic and lpml} 
 in the Appendix). We thus present the analysis of the MIMIC-III data using the BAL model with normal regression functions \eqref{eq:normal_reg} and the DSS prior \eqref{eq:ss_prior1}, implemented by running Algorithm \ref{algo} 
 for 10,000 iterations, discarding the first 5,000 as burn-in. Results obtained by applying ISS and Q-learning with lasso and RF are summarized in Appendix \ref{app:clinical}. 
\begin{figure}
    \centering
    \includegraphics[width=0.9\linewidth,height=205pt]{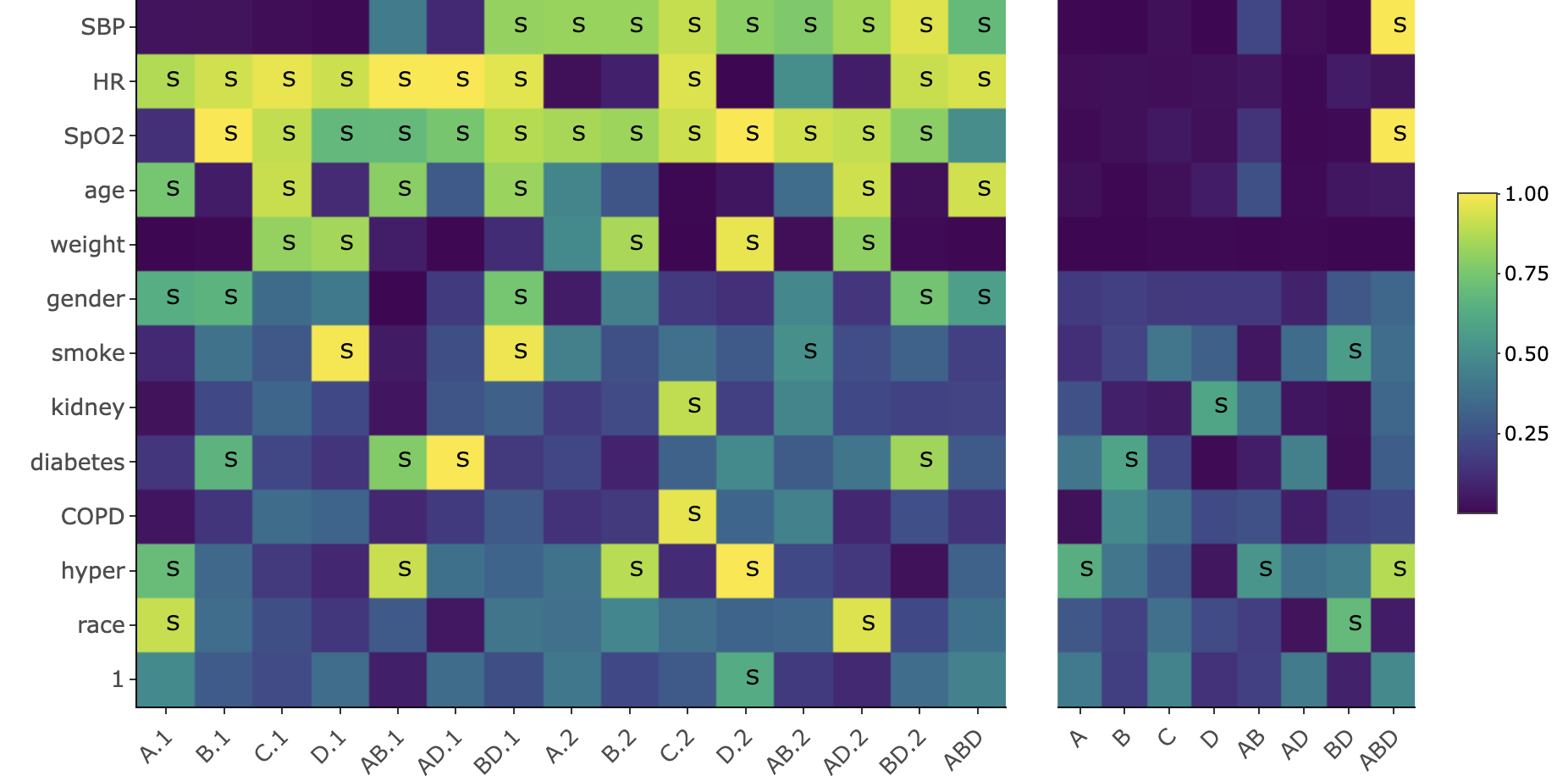}
\caption{MIMIC-III data. Posterior inclusion probabilities at stage 2 (left panel) and stage 1 (right panel). Treatment options are shown on the x-axis, while clinical factors are listed on the y-axis. A stands for ACEi, B for beta-blockers, C for CCB, and D for diuretics. AB stands for the combination of A and B, with other combinations defined similarly. The numeric suffixes 1 and 2 distinguish first and second-stage agents, e.g. A.1 and A.2. The symbol S, corresponding to posterior inclusion probabilities larger than 0.5, indicates statistical significance.}\label{figinteraction2_new}
\end{figure}
Studying interaction effects in this clinical study provides insight into how antihypertensive agents, or their combinations, interact with specific clinical factors to influence patients' SBP. Figure~\ref{figinteraction2_new} shows the posterior inclusion probabilities for the 12 clinical factors and their interactions with the eight treatment options.
In stage 1, a limited number of variables demonstrate statistical significance, with SBP and SpO2 showing notable effects, particularly in treatments involving a combination of ACEi, beta-blockers and diuretics. 
In stage 2, SBP, SpO2, smoking status, kidney, diabetes, chronic hypertension, and race remain significant across various drug combinations, highlighting their continued relevance. Furthermore, a broader range of clinical factors, including HR, age, weight, gender, and COPD, show higher posterior inclusion probabilities when interacting with treatments, suggesting their growing relevance in optimizing therapy as treatment progresses.
\begin{figure}
\centering
\includegraphics[clip,trim=0.6cm 1.5cm 1.5cm 2cm,width=0.31\textwidth]{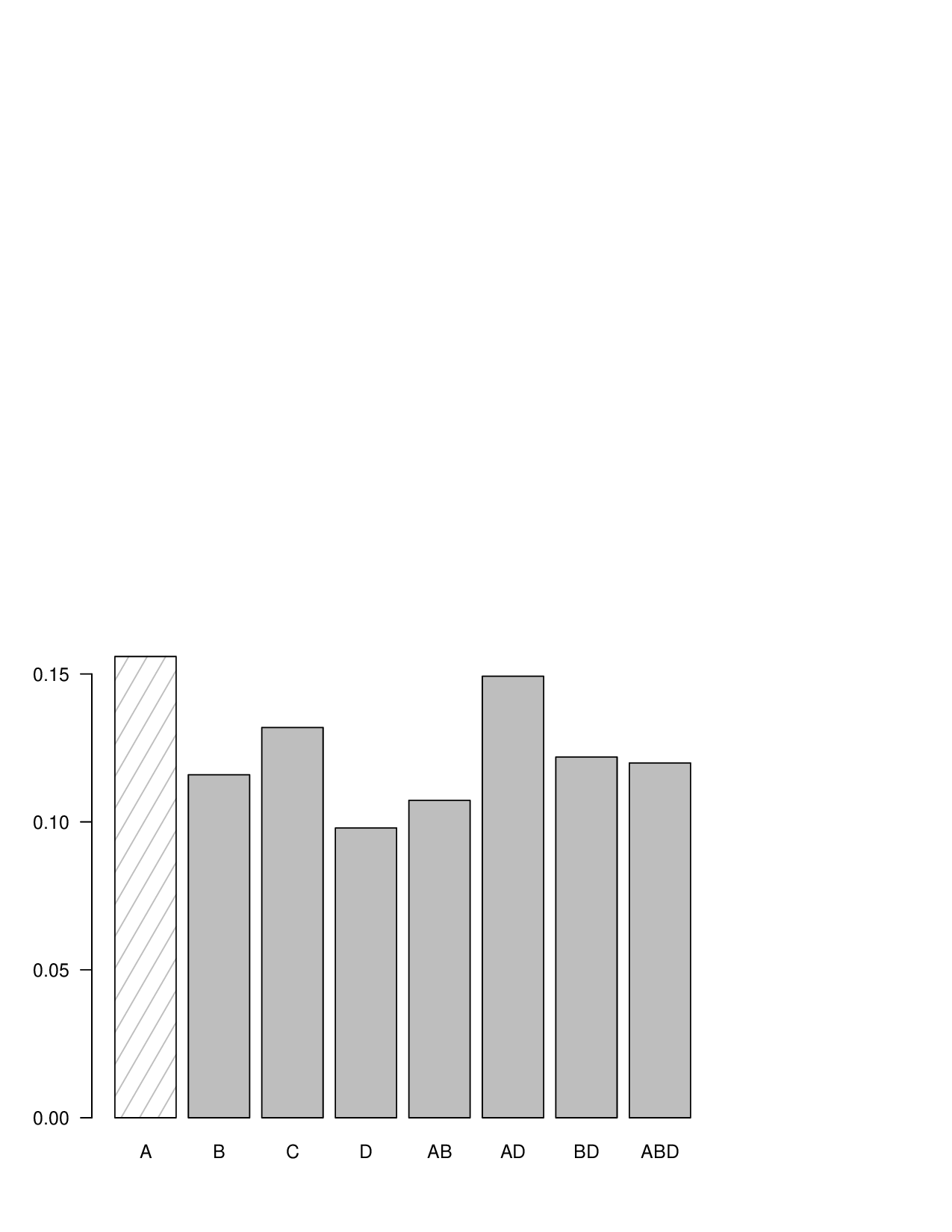}\hspace{0.5cm}\includegraphics[clip,trim=0.6cm 1.5cm 1.5cm 2cm,width=0.31\textwidth]{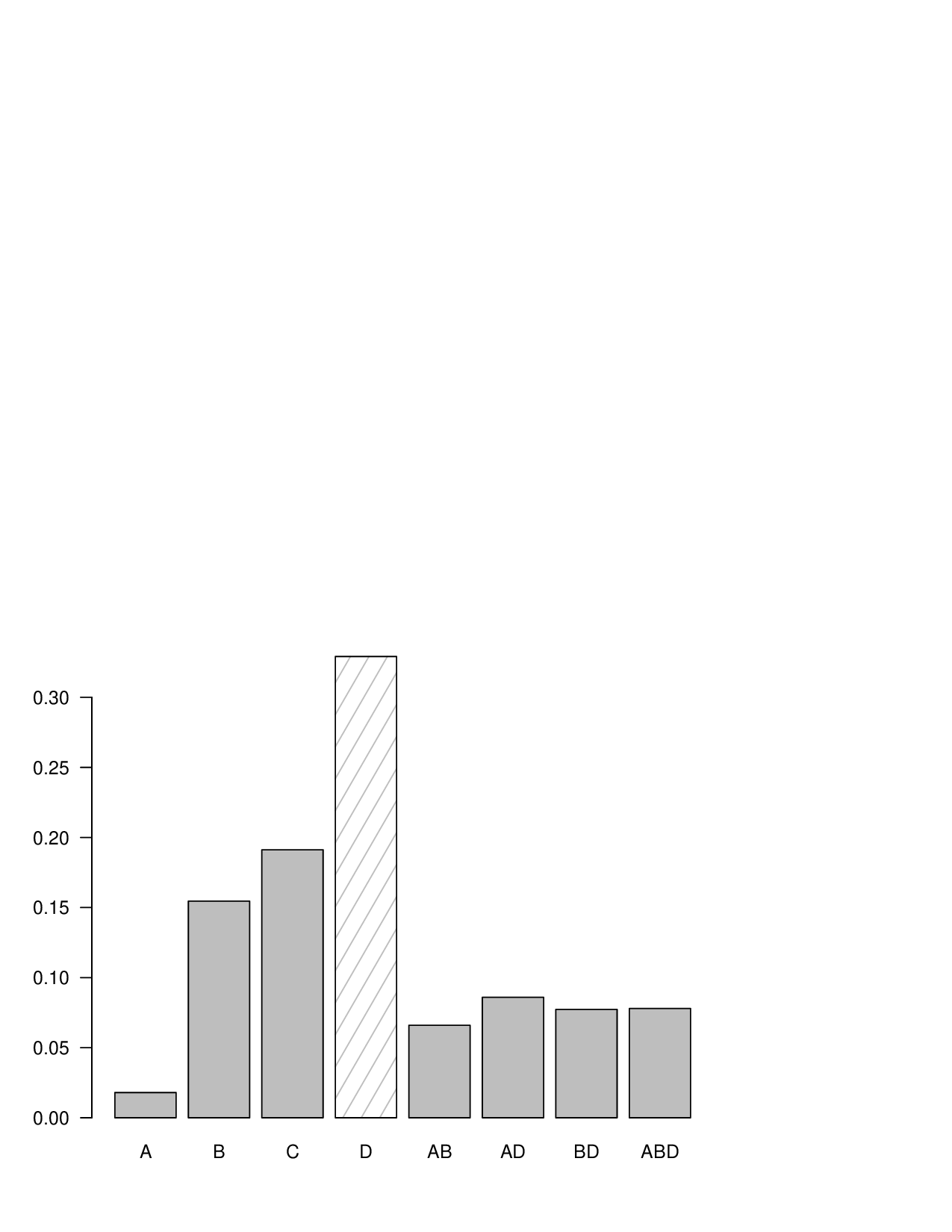}\hspace{0.5cm}\includegraphics[clip,trim=0.6cm 1.5cm 1.5cm 2cm,width=0.31\textwidth]{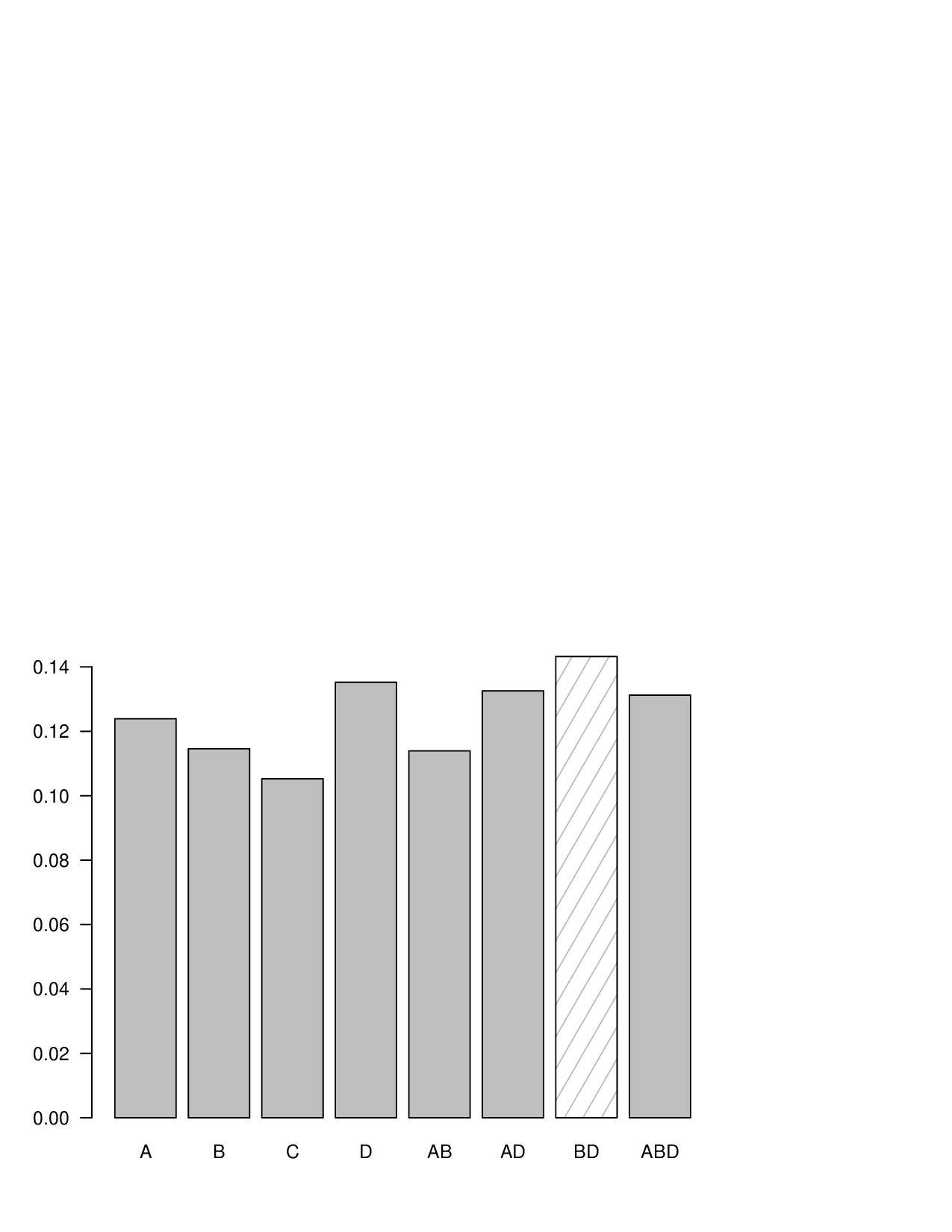}
\caption{MIMIC-III data. Posterior probability of each of the \(T=8\) treatment options being the optimal choice for Individual \#1 at stage 1 (left panel), for Individual \#2 at stage 2 (middle panel), and for Individual \#2 at stage 1 (right panel). A stands for ACEi, B for beta-blockers, C for CCB, and D for diuretics. AB stands for the combination of A and B, with other combinations defined similarly. In each barplot, the white bin indicates the largest posterior probability.}
\label{second patient combination_new_1}
\end{figure}
To further illustrate the utility of our method, we present additional results for two randomly selected patients, referred to as Individuals \#1 and \#2.
Figure~\ref{second patient combination_new_1} presents the stage-specific probabilities of each treatment being optimal, considering clinical profiles and treatment history. For Individual \#1, who received treatment only on day one, ACEi emerges as the most effective option. For Individual \#2, treated over two days, the optimal regimen is beta-blockers and diuretics at stage 1, followed by diuretics alone at stage 2.

\section{Conclusion}\label{section6}
We introduced BAL, a novel Bayesian framework for DTRs that integrates uncertainty quantification and variable selection, while favoring information sharing across stages. In a comprehensive simulation study, BAL outperformed the considered alternatives in identifying optimal treatment sequences and key variables, and its DSS variant proved able to share information across stages when convenient. We further demonstrated the practical value of BAL with the DSS prior, through an application to ICU data on patients with severe acute arterial hypertension. The promising results of this work open several avenues for extending our approach. One direction is to refine the DSS prior so to allow for information sharing not only across stages for the same variable, but also among different variables within and across stages. Another direction involves introducing more flexible nonparametric regression models, with the DSS prior adapted to the inner spike-and-slab Bayesian nonparametric framework \citep{Can17, Can23}. Finally, extending the approach to settings with more than two stages presents an interesting challenge, likely requiring new computational strategies to manage the increasing complexity of counterfactual treatment sequences.


\section*{Acknowledgements}
Matteo Borrotti and Bernardo Nipoti have been supported by Fondazione Cariplo, grant n° 2024-0140.


\section*{Data Availability}

The data supporting the findings in this paper are available from PhysioNet and were used under a licensing agreement. Due to access restrictions, they are available only with PhysioNet’s authorization at https://physionet.org/content/mimiciii/1.4/.


\allowdisplaybreaks

\clearpage
\begin{appendices}

\numberwithin{figure}{section}
\numberwithin{table}{section}
\numberwithin{algorithm}{section}
\numberwithin{proposition}{section}
\numberwithin{equation}{section}

\begin{center}
{\LARGE Appendix}
\end{center}

\noindent This appendix is organized as follows. Section \ref{sec:balAPP} provides more details on the BAL model. Specifically, in Section \ref{sec:appA}, we discuss the validity of the equivalences presented in \eqref{eq:alt_opt1} and provide proofs for specific cases. In Section \ref{sec:appASS} we discuss the stable unit treatment value assumption and the assumption of no unmeasured confounders. Section \ref{sec:corr_derivation} outlines the derivation of the expression in \eqref{eq:corr}, which pertains to the correlation between the indicator variables controlling the significance of the same regressor across the two stages. Section \ref{sec:appB} offers additional details on posterior computations. Sections \ref{app:sim} and \ref{app:clinical} present supplementary results on the simulation study and the analysis of the MIMIC-III data, respectively. 

\section{More details on the BAL model}\label{sec:balAPP}
\subsection{On the validity of the equivalences in \eqref{eq:alt_opt1}}\label{sec:appA}

We discuss and prove the validity of the equivalences in \eqref{eq:alt_opt1} for some cases of interest. As in Section~\ref{section2}, $Y_{i2}$ and $\bar{Y}_{i2}$ are assumed independent and distributed according to $f_2$, as specified in \eqref{eq:model1}. Similarly, $Y_{i}^\opt$ and $\bar{Y}_{i}^\opt$ are assumed independent and distributed according to $f_1$. We start by proving Proposition~\ref{proposition1}, which refers to the normal specification of $f_2$ and $f_1$, described in Section~\ref{sec:reg} and implemented in the paper. Propositions~\ref{proposition2} and~\ref{proposition3} showcase that the same approach used in this work can be used for other specifications of $f_2$ and $f_1$, such as logistic and Poisson regressions.

\begin{proof}[Proof of Proposition \ref{proposition1}]
Given the continuity of the payoffs $Y_{i2}$ and $\bar{Y_{i2}}$, proving \eqref{eq:alt_opt1} is equivalent to proving \eqref{eq:alt_opt3}. Combining the assumptions on $Y_{i2}$ and $\bar{Y}_{i2}$, we obtain that $Y_{i2}-\bar{Y}_{i2}\sim \textsc{N}(\mu_1-\mu_2,2\sigma_2^2)$, where $\mu_1=\btheta_2^\intercal [\bm{z}_{i2} \times (a_{i1},a_{i2})]$ and $\mu_2=\btheta_2^\intercal [\bm{z}_{i2} \times (a_{i1},1-a_{i2})]$. 
Thus, we can write
\begin{align*}
    \Pr(Y_{i2}-\bar{Y}_{i2}\geq 0)> 0.5 & \iff \Phi\left(\frac{\mu_1-\mu_2}{\sqrt{2\sigma_2^2}}\right)> 0.5\\
    &\iff \mu_1> \mu_2\\
    &\iff \mathbb{E}[Y_{i2}\mid a_{i1},a_{i2},\bm{z}_{i2},\btheta_2]> \mathbb{E}[Y_{i2}\mid a_{i1},1-a_{i2},\bm{z}_{i2},\btheta_2]\\
    &\stackrel{\footnotesize\text{(a)}}{\iff} a_{i2}^\opt=a_{i2},
\end{align*}
where (a) follows from \eqref{eq:decision_rule}.\\

\noindent Similarly for the equivalence involving $Y_{i}^\opt$ and $\bar{Y}_{i}^\opt$.
\end{proof}

\vspace{0.2cm}

\begin{proposition}\label{proposition2}
If $f_2$ and $f_1$ are logistic regressions, specified as 
\begin{equation}\label{eq:poi_reg}
\begin{aligned}
f_{2}(y; a_{i1},a_{i2},\bm{z}_{i2},\btheta_2)&=f_{\textsc{Ber}}(y; 1/(1+\exp\{-\btheta_2^\intercal [\bm{z}_{i2} \times (a_{i1},a_{i2})]\})),\\
     f_{1}(y; a_{i1},\bm{z}_{i1},\btheta_1)&=f_{\textsc{Ber}}(y; 1/(1+\exp\{-\btheta_1^\intercal [\bm{z}_{i1} \times a_{i1}]\})),
    \end{aligned}
\end{equation}
where $f_{\textsc{Ber}}(y;\pi)$ denotes the probability mass function of a Bernoulli random variable with mean $\pi$,
then the equivalences in \eqref{eq:alt_opt1} hold.
\end{proposition}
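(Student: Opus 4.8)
The plan is to follow the same route as the proof of Proposition~\ref{proposition1}, but to work directly with the \emph{ratio} in \eqref{eq:alt_opt1} rather than passing through the simplified equivalence \eqref{eq:alt_opt3}. The reason is that Bernoulli payoffs are discrete: the tie event $\{Y_{i2}=\bar{Y}_{i2}\}$ has strictly positive probability, so ``ratio $>1$'' is no longer the same as ``$P(Y_{i2}\geq\bar{Y}_{i2})>0.5$'', and the two one-sided probabilities must be handled separately.

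First I would set
\[
\pi_1=\mathbb{E}[Y_{i2}\mid a_{i1},a_{i2},\bm{z}_{i2},\btheta_2]=\bigl(1+\exp\{-\btheta_2^\intercal[\bm{z}_{i2}\times(a_{i1},a_{i2})]\}\bigr)^{-1},
\]
and let $\pi_2=\mathbb{E}[Y_{i2}\mid a_{i1},1-a_{i2},\bm{z}_{i2},\btheta_2]$ be the analogous quantity with $a_{i2}$ replaced by $1-a_{i2}$; by the conditional independence assumption of Section~\ref{section2}, $\bar{Y}_{i2}$ is Bernoulli with mean $\pi_2$ and independent of $Y_{i2}$. A short enumeration over $(Y_{i2},\bar{Y}_{i2})\in\{0,1\}^2$ then gives
\[
P(Y_{i2}\geq\bar{Y}_{i2})=\pi_1+(1-\pi_1)(1-\pi_2),\qquad P(Y_{i2}\leq\bar{Y}_{i2})=\pi_2+(1-\pi_1)(1-\pi_2),
\]
both strictly positive since $\pi_1,\pi_2\in(0,1)$. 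Their difference is exactly $\pi_1-\pi_2$, so the cross term $(1-\pi_1)(1-\pi_2)$ cancels and the ratio in \eqref{eq:alt_opt1} exceeds $1$ if and only if $\pi_1>\pi_2$. As in Proposition~\ref{proposition1}, $\pi_1>\pi_2$ means $\mathbb{E}[Y_{i2}\mid a_{i1},a_{i2},\bm{z}_{i2},\btheta_2]>\mathbb{E}[Y_{i2}\mid a_{i1},1-a_{i2},\bm{z}_{i2},\btheta_2]$, which by \eqref{eq:decision_rule} is equivalent to $a_{i2}^\opt=a_{i2}$. The second equivalence follows verbatim after replacing $(Y_{i2},\bar{Y}_{i2})$ by $(Y_i^\opt,\bar{Y}_i^\opt)$ and the stage~2 linear predictor by $\btheta_1^\intercal[\bm{z}_{i1}\times a_{i1}]$.

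The argument is entirely elementary; strict monotonicity of the logistic link is not even needed, since the conditional expectations already coincide with $\pi_1$ and $\pi_2$. The only point that calls for a little care is the one flagged above: the discreteness of Bernoulli payoffs forces the proof through the ratio form of \eqref{eq:alt_opt1} rather than through \eqref{eq:alt_opt3}, and the key observation that makes this go through is that the cross term cancels in $P(Y_{i2}\geq\bar{Y}_{i2})-P(Y_{i2}\leq\bar{Y}_{i2})$.
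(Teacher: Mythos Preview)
Your proposal is correct and follows essentially the same approach as the paper's proof: both set up $\pi_1,\pi_2$, enumerate the joint distribution of $(Y_{i2},\bar Y_{i2})$, and compare $P(Y_{i2}\geq\bar Y_{i2})$ with $P(Y_{i2}\leq\bar Y_{i2})$. The only cosmetic difference is that the paper rewrites the ratio in terms of the linear predictors $\mu_1,\mu_2$ and then argues $\mu_1>\mu_2$, whereas you observe directly that the numerator minus denominator equals $\pi_1-\pi_2$; your shortcut is slightly cleaner but not a genuinely different route.
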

\begin{proof} We set $\pi_1=1/(1+\exp\{-\mu_1\})$ and $\pi_2=1/(1+\exp\{-\mu_2\})$, with $\mu_1=\btheta_2^\intercal [\bm{z}_{i2} \times (a_{i1},a_{i2})]$ and $\mu_2=\btheta_2^\intercal [\bm{z}_{i2} \times (a_{i1},1-a_{i2})]$. Combining the assumptions on $Y_{i2}$ and $\bar{Y}_{i2}$, we find that $Y_{i2}-\bar{Y}_{i2}$ can take values $\{-1,0,1\}$, with probabilities respectively equal to $\{(1-\pi_1)\pi_2,\pi_1\pi_2+(1-\pi_1)(1-\pi_2),\pi_1(1-\pi_2)\}$.
Thus, we have
\begin{align*}
    \frac{\Pr(Y_{i2}\geq\bar{Y}_{i2})}{\Pr(Y_{i2}\leq\bar{Y}_{i2})}>1 & \iff \frac{d_1(1-d_2)+d_1d_2+(1-d_1)(1-d_2)}{(1-d_1)d_2+d_1d_2+(1-d_1)(1-d_2)}>1\\
    &\iff \frac{1+\exp\{\mu_1\}+\exp\{\mu_1+\mu_2\}}{1+\exp\{\mu_2\}+\exp\{\mu_1+\mu_2\}}>1\\
    & \iff \mu_1>\mu_2\\
    &\iff \mathbb{E}[Y_{i2}\mid a_{i1},a_{i2},\bm{z}_{i2},\btheta_2]> \mathbb{E}[Y_{i2}\mid a_{i1},1-a_{i2},\bm{z}_{i2},\btheta_2]\\
    &\stackrel{\footnotesize\text{(a)}}{\iff} a_{i2}^\opt=a_{i2},
\end{align*}
where (a) follows from \eqref{eq:decision_rule}.\\

\noindent Similarly for the equivalence involving $Y_{i}^\opt$ and $\bar{Y}_{i}^\opt$.
\end{proof}

\begin{proposition}\label{proposition3}
If $f_2$ and $f_1$ are Poisson regressions, specified as 
\begin{equation}\label{eq:poi_reg}
\begin{aligned}
f_{2}(y; a_{i1},a_{i2},\bm{z}_{i2},\btheta_2)&=f_{\textsc{Poi}}(y; \btheta_2^\intercal [\bm{z}_{i2} \times (a_{i1},a_{i2})]),\\
     f_{1}(y; a_{i1},\bm{z}_{i1},\btheta_1)&=f_{\textsc{Poi}}(y; \btheta_1^\intercal [\bm{z}_{i1} \times a_{i1}]),
    \end{aligned}
\end{equation}
where $f_{\textsc{Poi}}(y; \lambda)$ denotes the probability mass function of a Poisson random variable with mean $\lambda$,
then the equivalences in \eqref{eq:alt_opt1} hold.
\end{proposition}
\begin{proof} Combining the assumptions on $Y_{i2}$ and $\bar{Y}_{i2}$, we obtain that $Y_{i2}-\bar{Y}_{i2}\sim \textsc{Skellam}(\mu_1,\mu_2)$,
where $\mu_1=\btheta_2^\intercal [\bm{z}_{i2} \times (a_{i1},a_{i2})]$ and $\mu_2=\btheta_2^\intercal [\bm{z}_{i2} \times (a_{i1},1-a_{i2})]$.
Thus, if we let $f_\textsc{Ske}(y;\mu_1,\mu_2)$ denote the probability mass function of a Skellam random variable with parameters $\mu_1$ and $\mu_2$, we can write
\begin{align*}
    \frac{\Pr(Y_{i2}\geq\bar{Y}_{i2})}{\Pr(Y_{i2}\leq\bar{Y}_{i2})}>1 & \iff \frac{\sum_{\ell=0}^\infty f_{\textsc{Ske}}(\ell;\mu_1,\mu_2)}{\sum_{\ell=-\infty}^0 f_{\textsc{Ske}}(\ell;\mu_2,\mu_1)}>1.\\
    & \iff \frac{\sum_{\ell=0}^\infty f_{\textsc{Ske}}(\ell;\mu_1,\mu_2)}{\sum_{\ell=0}^\infty f_{\textsc{Ske}}(-\ell;\mu_2,\mu_1)}>1\\
    & \stackrel{\footnotesize\text{(a)}}{\iff} \mu_1>\mu_2\\
    &\iff \mathbb{E}[Y_{i2}\mid a_{i1},a_{i2},\bm{z}_{i2},\btheta_2]> \mathbb{E}[Y_{i2}\mid a_{i1},1-a_{i2},\bm{z}_{i2},\btheta_2]\\
    &\stackrel{\footnotesize\text{(b)}}{\iff} a_{i2}^\opt=a_{i2},
\end{align*}
where (a) is implied by the fact that, for any $\ell>0$, 
\begin{equation*}
    \frac{f_{\textsc{Ske}}(\ell;\mu_1,\mu_2)}{f_{\textsc{Ske}}(-\ell;\mu_2,\mu_2)}=\left(\frac{\mu_1}{\mu_2}\right)^\ell,
\end{equation*}
and (b) follows from \eqref{eq:decision_rule}.\\

\noindent Similarly for the equivalence involving $Y_{i}^\opt$ and $\bar{Y}_{i}^\opt$.
\end{proof}

\subsection{SUTVA and NUC assumptions}\label{sec:appASS}
We briefly discuss the stable unit treatment value assumption (SUTVA) \citep{rubin1978bayesian} and the no unmeasured confounders (NUC) assumption \citep{rubin1980bias}, which are invoked in Section \ref{section2}. 
\begin{itemize}
    \item[(i)] The SUTVA asserts that the potential outcome for a given individual depends solely on their own treatment history and not on the treatment assignments of other individuals.     
    \item[(ii)] The NUC assumes that, at each stage, the treatment assignment is conditionally independent of future potential outcomes and covariates, given the observed history up to that point.  
\end{itemize}
Both the SUTVA and the NUC assumption are standard in the context of DTRs, as thoroughly and insightfully discussed in \citet{tsiatis2019dynamic}. These assumptions are fundamental in causal inference. Specifically and in the context of DTRs considered in our paper, both assumptions are required to identify the $j$th stage-specific average causal treatment effect, defined as $\mathbb{E}[Y_j(1)] - \mathbb{E}[Y_j(0)]$, when the treatment set is $\mathcal{A} = \{0,1\}$. Requiring that the distribution of each potential outcome $Y_j(1)$ and $Y_j(0)$ is independent of the treatment assignment mechanism is essential for reliably estimating the average causal treatment effect. While the primary focus of our work is not on causal inference, we implicitly rely on the SUTVA and NUC assumption when the models for the stage-specific pseudo-outcomes in Section \ref{section2}.

\subsection{Derivation of the expression in \eqref{eq:corr}}\label{sec:corr_derivation}
We show that $\rho(\delta_{l_11},\delta_{l_22})=
        1/(1+a+b)$
if $l_1=l_2\in\{1,\ldots,d\}$, and $0$ otherwise. $\delta_{l_11}$ and $\delta_{l_12}$ are Bernoulli random variables with parameters $w_{l_11}$ and $w_{l_22}$ respectively, and $w_{l_11} = w_{l_22}=w \sim \textsc{Beta}(a, b)$ for $l_1=l_2\in\{1,\ldots,d\}$. Since $\delta_{l_1 1}$ and $\delta_{l_2 2}$ are conditionally independent given $w$, we can express their joint conditional distribution as:
\[ \Pr(\delta_{l_1 1} = \alpha, \delta_{l_2 2} = \beta \mid w) = w^{\alpha+\beta} (1 - w)^{2 - (\alpha+\beta)}, \]
where $\alpha, \beta \in \{0, 1\}$. To find the unconditional joint probability, we marginalize with respect to $w$ and obtain
\[ \Pr(\delta_{l_1 1} = \alpha, \delta_{l_2 2} = \beta) = \int_0^1 \Pr(\delta_{l_1 1} = \alpha, \delta_{l_2 2} = \beta \mid w) f(w)  \dd w, \]
where $f(w)$ is the probability density function of the $\text{Beta}(a,b)$ distribution, that is
\[ f(w) = \frac{w^{a - 1} (1 - w)^{b - 1}}{\text{B}(a, b)},\]
and $\text{B}(a, b)$ is the Beta function. This leads to:
\begin{align*}\Pr(\delta_{l_1 1} = \alpha, \delta_{l_2 2} = \beta) &=\frac{\text{B}(\alpha+\beta+a, 2 - (\alpha+\beta) + b)}{\text{B}(a,b)}
\end{align*}
Marginally, we have 
\[ \mathbb{E}[\delta_{l_1,1}]=\Pr(\delta_{l_1 1} = 1) = \frac{\text{B}(a + 1, b)}{\text{B}(a,b)} = \frac{a}{a + b}, \]
and similarly for $\mathbb{E}[\delta_{l_2,2}]$. From this we get 
\[ \mathbb{E}[\delta_{l_1 1} \delta_{l_2 2}] = \Pr(\delta_{l_1 1} = 1, \delta_{l_2 2} = 1) = \frac{\text{B}(a + 2, b)}{\text{B}(a,b)} = \frac{a (a + 1)}{(a+b) (a+b + 1)}. \]
We can thus write the covariance between  $\delta_{l_1 1}$ and $\delta_{l_2 2}$ as
\begin{align*}
\text{Cov}(\delta_{l_1 1}, \delta_{l_2 2}) =\mathbb{E}[\delta_{l_1 1} \delta_{l_2 2}] - \mathbb{E}[\delta_{l_1 1}] \mathbb{E}[\delta_{l_2 2}]=\frac{a b}{(a + b)^2 (a + b + 1)}.
\end{align*}
The variance of $\delta_{l_1 1}$ is given by
\[ \text{Var}(\delta_{l_1 1}) = \mathbb{E}[\delta_{l_1 1}^2] - (\mathbb{E}[\delta_{l_1 1}])^2 = \frac{a}{a + b} \left(1 - \frac{a}{a + b}\right) = \frac{a b}{(a + b)^2}, \]
which coincides with the variance for $\delta_{l_2 2}$. As a result, we can write the correlation as
\[ \rho (\delta_{l_1 1}, \delta_{l_2 2}) = \frac{\frac{a b}{(a + b)^2 (a + b + 1)}}{\sqrt{\frac{a b}{(a + b)^2} \cdot \frac{a b}{(a + b)^2}}} = \frac{1}{1+a + b}. \]

\section{Posterior Inference}\label{sec:appB}
\subsection{Full conditional distributions}\label{sec:FC}
We next describe the full conditional distributions involved in the updates of Algorithm \ref{algo}, presented in Section \ref{sec:algo}.\\

\noindent\emph{Full conditional of $\btheta_{1}$:}
\begin{equation*}
    \btheta_{1}\mid \ldots\sim \textsc{N}_{d_1}\left(\bm{\mu}_{1}, \bm{\Sigma}_{1}\right),
\end{equation*}
where $$\bm{\Sigma}_{1}^{-1}=\frac{\bm{X}_1^{T}  \bm{X}_1}{\sigma_{1}^{2}}  +D_{1}^{-1}$$ and
$$
\bm{\mu}_{1}=\frac{\bm{\Sigma}_{1}  \bm{X}_1^{T}  \bY^{\text{\opt}}}{\sigma_{1}^2}.
$$
$D_{1}$ is a diagonal matrix with entries $r^{\left(1-\delta_{l_11}\right)}\psi_{l_11}, \ l_1=1,\ldots,d_1$, and $d_1$ is the dimension of $\btheta_{1}$,  
$$\bm{X}_{1}=\left(\begin{array}{c}
\bm{z}_{11} \times a_{11} \\
\ldots\\
\bm{z}_{n1} \times a_{n1}
\end{array}\right), \quad \bY^{\opt}=\left(\begin{array}{c}
Y_{11}+\max(V_{12}^{(0)},V_{12}^{(1)}) \\
\ldots\\
Y_{n1}+\max(V_{n2}^{(0)},V_{n2}^{(1)})
\end{array}\right).$$

\vspace{0.3cm}

\noindent\emph{Full conditional of $\btheta_{2}$:}
\begin{equation*}
    \btheta_{2}\mid \ldots \sim \textsc{N}_{d_2}\left(\bm{\mu}_{2}, \bm{\Sigma}_{2}\right),
\end{equation*}
where $$\bm{\Sigma}_{2}^{-1}=\frac{\bm{X}^{*T}  \bm{X}^{*}}{\sigma_{2}^{2}}+D_{2}^{-1}$$ and
$$
\bm{\mu}_{2}=\frac{\bm{\Sigma}_{2} \bm{X}^{*T} \bY^{*}}{\sigma_{2}^2}.
$$
$D_{2}$ is a diagonal matrix with entries $r^{\left(1-\delta_{l_22}\right)}\psi_{l_22}, \ l_2=1,\ldots,d_2$, and $d_2$ is the dimension of $\btheta_{2}$. Moreover,
$$\bm{X}^{*}=\left(\begin{array}{c}
\bm{z}_{12} \times (a_{11},a_{12}) \\
\ldots\\
\bm{z}_{n2} \times (a_{n1},a_{n2})\\
\bm{z}_{12} \times (a_{11},1-a_{12})\\
\ldots\\
\bm{z}_{n2} \times (a_{n1},1-a_{n2})
\end{array}\right), \quad \bY^{*}=\left(\begin{array}{c}
Y_{12} \\
\ldots\\
Y_{n2} \\
V_{12}^{(1-a_{12})} \\
\ldots\\
V_{n2}^{(1-a_{n2})}\\
\end{array}\right).$$   

\noindent\emph{Full conditional of $\sigma_{1}^2$:}
\begin{equation*}
    \sigma_{1}^{2}\mid \ldots \sim \textsc{inv-Gamma} \left(\frac{n+1}{2}, \beta+\frac{1}{2}\right),
\end{equation*}
where $\beta=\frac{1}{2} \sum_{i=1}^{n}\left(Y_{i1}-\left(\btheta_1^T [\bm{z}_{i1} \times a_{i1}]-\max\left(V_{i2}^{(0)},V_{i2}^{(1)}\right)\right)\right)^2$ is the scale parameter of the inverse gamma distribution.\\

\noindent\emph{Full conditional of $\sigma_{2}^2$:}
\begin{equation*}
    \sigma_{2}^{2}\mid \ldots \sim \textsc{inv-Gamma} \left(n+\frac{1}{2},\beta+\frac{1}{2}\right),
\end{equation*}
where $$\beta=\frac{1}{2} \sum_{i=1}^{n}\left(\left(Y_{i2}-\btheta_2^T [\bm{z}_{i2} \times (a_{i1}, a_{i2})]\right)^2+\left(V_{i2}^{(1-a_{i2})}-\btheta_2^T [\bm{z}_{i2} \times (a_{i1}, 1-a_{i2})]\right)^2\right)$$ is the scale parameter of the inverse gamma distribution.\\

\noindent\emph{Full conditionals for $\delta_{l_11}$ and $\delta_{l_22}$:} for any $l_1=1,\ldots,d_1$
\begin{equation*}
    \delta_{l_11}\mid \ldots \sim \textsc{Bern}\left(\frac{w_{l_11} f_{\textsc{N}}(\theta_{l_11};0,\psi_{l_11})}{w_{l_11} f_{\textsc{N}}(\theta_{l_11};0,\psi_{l_11})+(1-w_{l_11}) f_{\textsc{N}}(\theta_{l_11};0,r\psi_{l_11})}\right).
\end{equation*}
Similarly, for any $l_2=1,\ldots,d_2$,
\begin{equation*}
    \delta_{l_22}\mid \ldots \sim \textsc{Bern}\left(\frac{w_{l_22} f_{\textsc{N}}(\theta_{l_22};0,\psi_{l_22})}{w_{l_22} f_{\textsc{N}}(\theta_{l_22};0,\psi_{l_22})+(1-w_{l_22}) f_{\textsc{N}}(\theta_{l_22};0,r\psi_{l_22})}\right).
\end{equation*}

\noindent\emph{Full conditionals for $\psi_{l_11}$ and $\psi_{l_22}$:} for any $l_1=1,\ldots,d_1$
\begin{equation*}
    \psi_{l_11}\mid \ldots \sim \textsc{inv-Gamma}\left(v+\frac{1}{2}, Q+\frac{\theta^{2}_{l_11}}{2r^{(1-\delta_{l_11})}}\right).
\end{equation*}
Similarly, for any $l_2=1,\ldots,d_2$,
\begin{equation*}
    \psi_{l_22}\mid \ldots \sim \textsc{inv-Gamma}\left(v+\frac{1}{2}, Q+\frac{\theta^{2}_{l_22}}{2r^{(1-\delta_{l_22})}}\right).
\end{equation*}

\noindent\emph{Full conditional for $w_{l1}=w_{l2}$:} for every $l=1,\ldots,p$
    \begin{equation*}
        w_{l1}\mid \ldots \sim \textsc{Beta}\left(a+\delta_{l1}+\delta_{l2},b+2-\delta_{l1}-\delta_{l2}\right).
    \end{equation*}
\noindent\emph{Full conditionals for $w_{l_11}$ and $w_{l_22}$:} for any $l_1=p+1,\ldots,d_1$:   
     \begin{equation*}
        w_{l_11}\mid \ldots \sim \textsc{Beta}\left(a+\delta_{l_11},b+1-\delta_{l_11}\right).
    \end{equation*}
Similarly, for any $l_2=p+1,\ldots,d_2$  
    \begin{equation*}
        w_{l_22}\mid \ldots \sim \textsc{Beta}\left(a+\delta_{l_22},b+1-\delta_{l_22}\right).
    \end{equation*}

\noindent\emph{Full conditional for the pseudo-outcomes $\bm{V}_2$ and $\bm{V}$:} for $i=1,\ldots,n$, we have
 \begin{align*}
    V_{i2}^{(0)}\mid \ldots &\sim \textsc{N}(\btheta_2^T[\bz_{i2}\times(a_{i1},0)],\sigma_2),\\
V_{i2}^{(1)}\mid \ldots &\sim \textsc{N}(\btheta_2^T[\bz_{i2}\times(a_{i1},1)],\sigma_2),\\
V_{i}^{(0)}\mid \ldots &\sim \textsc{N}(\btheta_1^T[\bz_{i1}\times 0],\sigma_1),\\
V_{i}^{(1)}\mid \ldots &\sim \textsc{N}(\btheta_1^T[\bz_{i1}\times 1],\sigma_1).
\end{align*}

\noindent\emph{Full conditionals for the hyper-parameters $a$ and $b$:}
\begin{equation*}
a \mid \ldots \sim \left(\frac{\Gamma(a+b)}{\Gamma(a)}\right)^{d_1+d_2-d}
\left(\prod_{l_1=1}^{d_1} w_{l_11} \prod_{l_2=d+1}^{d_2} w_{l_22}\right)^{(a-1)}p(a),
\end{equation*} 
\begin{equation*}
b \mid \ldots \sim \left(\frac{\Gamma(a+b)}{\Gamma(b)}\right)^{d_1+d_2-p}
\left(\prod_{l_1=1}^{d_1} (1-w_{l_11}) \prod_{l_2=d+1}^{d_2} (1-w_{l_22})\right)^{(b-1)}p(b),
\end{equation*}
where $p(a)$ and $p(b)$ represents the prior distributions of $a$ and $b$, in this work assumed as inverse gamma with both shape and scale parameters equal to 1. To sample from the full conditional distributions of $a$ and $b$, we utilize a random-walk Metropolis-Hastings algorithm \citep{chib1995understanding}, with uniform proposal centered at the current value of the hyper-parameter, with variances chosen to achieve optimal acceptance rates \citep{Gel97}.

\subsection{Algorithm}\label{sec:algo}
The joint distribution of observations $\bY_1=(Y_{11},\ldots,Y_{n1})$ and $\bY_2=(Y_{12},\ldots,Y_{n2})$, pseudo-outcomes $\bm{V}_2=(V_{12}^{(0)},V_{12}^{(1)},\ldots,V_{n2}^{(0)},V_{n2}^{(1)})$ and $\bm{V}=(V_{1}^{(0)},V_{1}^{(1)},\ldots,V_{n}^{(0)},V_{n}^{(1)})$, and parameters $\varphi=(\btheta_1,\btheta_2,\allowbreak \bdelta_1,\bdelta_2,\bpsi_1,\bpsi_2,\bw_1,\bw_2,\sigma_1^2,\sigma_2^2)$ is described by combining the BAL model in \eqref{eq:aug_model}, completed with normal regression functions as in \eqref{eq:normal_reg} and DSS prior as in \eqref{eq:ss_prior1}, with the specification of a normal prior with independent components for $(\sigma_1^2,\sigma_2^2)$. From this, we devise a Gibbs sampling scheme for the posterior distribution $p(\varphi,\bm{V}_2,\bm{V}\mid \by_1,\by_2)$, from which it is straightforward to estimate the posterior distribution $p(\varphi\mid \by_1,\by_2)$ and its functionals. The steps of the Gibbs sampler are summarized in Algorithm~\ref{algo} and involve a sequential update of the parameters characterizing the two stage-specific regressions and the pseudo-outcomes. Thanks to the conjugate specification of the priors, the full conditional distributions involved in the algorithm are simple to sample from, as detailed in Section \ref{sec:FC}. 
Again, the extension to the case with $T$ treatment options at each stage is straightforward.  

\begin{algorithm}
\caption{Gibbs sampler for BAL with DSS}\label{algo}
\begin{algorithmic}[1]
\State Initialize parameters $\varphi^{(0)}$ \Comment{Superscript $(\ell)$ to denote iteration $\ell$}
\State Initialize pseudo-outcomes $(\bm{V}_{2}^{(0)},\bm{V}^{(0)})$
\For{$\ell = 1$ to $L$} \Comment{$L$: number of iterations}
\State Update parameters $\varphi^{(\ell)}$;\vspace{0.05cm}
\State Update pseudo-outcomes $(\bm{V}_{2}^{(\ell)},\bm{V}^{(\ell)})$;
\State Compute $(A_{i2}^{\opt (\ell)},A_{i1}^{\opt (\ell)})$, for $i=1,\ldots,n$.
\EndFor 
\State Set $\hat{a}_{ij}^\opt$ as mode of $\{A_{ij}^{\opt(L_0+1)},\ldots,A_{ij}^{\opt(L)}\}$, for $j=1,2$  \Comment{$L_0$: burn-in iterations}
\end{algorithmic}
\end{algorithm}

The hyper-parameters $a$ and $b$ control both the prior distribution of the probability of inclusion of a regressor, and, as apparent in \eqref{eq:corr}, the correlation across the significance of common regressors across stages. We assign them a prior and discuss, in the Supplementary Material, a Metropolis--Hastings step to deal with their update in Algorithm \ref{algo}.

\section{Additional details on the Simulation Study}\label{app:sim}
\subsection{Definition of the measures of performance}\label{sec:measures}
We describe the measures used to evaluate model accuracy in identifying significant regressors across stages, as well as the impact of variable selection on correctly determining the optimal DTR.\\
As far as variable selection is concerned, model performance is assessed by resorting to three evaluation indices: proportion of false negatives (FN), that is regressors that are incorrectly deemed as not significant; proportion of false positives (FP), that is regressors incorrectly identified as significant; and F$_1$ score, combining precision and recall into a single metric. At the same time, the capability of a model to correctly identify the optimal treatment is assessed by means of the treatment error rate (ER) and the payoff mean absolute error (MAE). 
While the ER only takes into account the actual optimality of the treatment identified as optimal, the MAE also accounts for the magnitude of the variation in terms of potential payoff, when a suboptimal treatment is mistakenly identified as optimal. When specific to the $j$th stage, the treatment ER and the MAE are defined as
\begin{equation*}
    \text{ER}_j=\frac{1}{n}\sum_{i=1}^{n}\mathbbm{1}{\{\hat{a}_{ij}^\opt\ne a_{ij}^\opt\}},\quad 
    \text{MAE}_j=\frac{1}{n}\sum_{i=1}^{n}|\mathbb{E}[Y_{ij}^\opt]-\mathbb{E}[Y_{ij}^{\widehat{\opt}}]|,
\end{equation*}
where $\mathbb{E}[Y_{ij}^\opt]$ and $\mathbb{E}[Y_{ij}^{\widehat{\opt}}]$ are the expected payoffs at the $j$th stage, respectively under the assumptions that the assigned treatments are, respectively, optimal and identified as optimal. The analogue overall measures of model performance are defined as 
\begin{equation*}
    \text{ER}=\frac{1}{n}\sum_{i=1}^{n}\mathbbm{1}{\{(\hat{a}_{i1}^\opt,\hat{a}_{i2}^\opt)\ne (a_{i1}^\opt,a_{i2}^\opt)\}},\quad 
    \text{MAE}=\frac{1}{n}\sum_{i=1}^{n}|\mathbb{E}[Y_{i}^\opt]-\mathbb{E}[Y_{i}^{\widehat{\opt}}]|.
\end{equation*}

\subsection{Data generating processes}\label{sec:dgp}
Throughout the simulation study of Section \ref{section4}, two data generating processes are considered, both defined as minor modifications of the class of generative models developed by \cite{chakraborty2010inference}, and later used in \cite{chakraborty2013inference} and \cite{laber2014dynamic}. We also introduce a third data generating process that differs from the previous ones in that the payoffs are not generated according to a linear model. 
\subsubsection{First experiment}\label{sec:dgp1}
A data generating process for synthetic data consisting of stage-specific covariates $\bm{z}_{ij}$, treatments $a_{ij}$, and stage-specific payoffs $y_{ij}$, is defined by means of the following chain of conditional distributions:
\begin{equation}\label{eq:dgp1}
\begin{aligned}
  &A_{ij}\simiid \textsc{Bern}(\{0,1\};0.5)&\text{(treatments)} \\
   &Z_{i1l}\simiid \textsc{Bern}(\{-1,1\};0.5), \text{ for }l=1,\ldots,k&\text{(covariates)} \\
   &Z_{i2l}\mid z_{i1l}\simind \textsc{Bern}(\{-1,1\};1/(1+\exp\{-z_{i1l}\}), \text{ if }l=1,\ldots,\floor{k/2} & \\ 
 &Z_{i2l}=z_{i1l}, \text{ if }l=\floor{k/2}+1,\ldots,k&\\
 &Y_{i1}\mid a_{i1},\bm{z}_{i1},\bm{z}_{i2},\btheta_{1}^*,\btheta_{2}^{*} \simind  \textsc{N}(\btheta_{1}^{*\intercal}[\bm{z}_{i1}\times a_{i1}]-\mathbb{E}[Y_{i2}^\opt],1)&\text{(payoffs)}\\
&Y_{i2}\mid a_{i1},a_{i2},\bm{z}_{i2},\btheta_{2}^{*} \simind \textsc{N}(\btheta_{2}^{*\intercal}[\bm{z}_{i2}\times(a_{i1},a_{i2})],1), &
\end{aligned}
\end{equation}
where $\mathbb{E}[Y_{i2}^\opt]=\max(\btheta_{2}^{*\intercal}[\bm{z}_{i2}\times(a_{i1},0)],\btheta_{2}^{*\intercal}[\bm{z}_{i2}\times(a_{i1},1)])$ and the notation $X\sim\textsc{Bern}(\{a,b\};\break \pi)$ is used to indicate $\Pr(X=b)=1-\Pr(X=a)=\pi$. According to \eqref{eq:dgp1}, covariates $\bm{z}_{ij}$ are generated to mimic a realistic scenario where about half remain constant across stages, while the values taken by the other half evolve from stage 1 to stage 2. To generate data where the significance of predictors is correlated across stages, we define the distribution of the regression coefficients \(\btheta_{1}^{*}\) and \(\btheta_{2}^{*}\), with dimensions \(d_1=2(k+1)\) and \(d_2=3(k+1)\), respectively, 
mimicking the structure of the DSS prior in \eqref{eq:ss_prior1}. Namely, 
\begin{equation}\label{eq:dgp_ss}
      \begin{aligned}
    \omega_{l1}^{*}=\omega_{l2}^{*}&\simiid \textsc{Beta}(a^{*},b^{*}), \text{ for }l=1,\ldots,d_1;\\
        \omega_{l2}^{*}&\simiid \textsc{Beta}(a^{*},b^{*}), \text{ for }l=d_1+1,\ldots,d_2;\\
        (\delta_{l_11}^{*},\delta_{l_22}^{*})\mid\omega_{l_11}^{*},\omega_{l_22}^{*}&\simind \textsc{Bern}(\omega_{l_11}^{*})\times\textsc{Bern}(\omega_{l_22}^{*}),\text{ for }l_j=1,\ldots,d_j;\\ 
      (m_{l_11}^{*},m_{l_22}^{*}) &\simind \textsc{Bern}\{-3,3\}\times \textsc{Bern}\{-3,3\},\text{ for }l_j=1,\ldots,d_j;\\
        (\tilde{\theta}_{l_11}^{*},\tilde{\theta}_{l_22}^{*})\mid m_{l_11}^*,m_{l_22}^* &\simind \textsc{N}(m_{l_11},1)\times \textsc{N}(m_{l_22},1),\text{ for }l_j=1,\ldots,d_j;\\
        \theta_{l_jj}^*&=\tilde{\theta}_{l_jj}^*\delta_{l_jj}^*, \text{ for }l_j=1,\ldots,d_j.
    \end{aligned}
\end{equation}

The three levels of correlation $\rho^*=\rho(\delta_{l1}^*,\delta_{l2}^*)$ between the indicator variables $\delta_{i1}^*$ and $\delta_{i2}^*$, when $l\in\{1,\ldots,d_1\}$, in the data generating process, that is weak ($\rho^*=0.3$), medium ($\rho^*=0.6$) and strong ($\rho^*=0.9$), are obtained through the specification of the parameters $(a^*,b^*)$ in \eqref{eq:dgp_ss}: by applying \eqref{eq:corr}, we have that $(a^*, b^*) \in \{ \left( 7/10, 49/30 \right), \left( 1/5, 7/15 \right), \left( 1/30, 7/90 \right) \}$ implies $\rho^* = \{0.3, 0.6, 0.9\}$, respectively. Each pair $(a^*, b^*)$ corresponds to an expected proportion of stage-specific significant predictors equal to $\mathbb{E}[\omega_{lj}^*]=0.3$, for $l=1,\ldots,d_j$ and $j=1,2$.
\subsubsection{Second experiment}\label{sec:dgp2}
Similarly to \eqref{eq:dgp1}, a data generating process for synthetic data consisting of stage-specific covariates $\bm{z}_{ij}$, treatments $a_{ij}$, and stage-specific payoffs $y_{ij}$, is defined by means of the following chain of conditional distributions:
\begin{align*}
  &A_{ij}\simiid \textsc{Unif}(\mathcal{A}_j)&\text{(treatments)} \\
   &A^*_{ijt}=\mathbb{I}_{\{t\}}(A_{ij}), \text{ for }t=1,\ldots,T-1&\text{(dummy variables)} \\
   &Z_{i1l}\simiid \textsc{Bern}(\{-1,1\};0.5), \text{ for }l=1,\ldots,k&\text{(covariates)}\\
   &Z_{i2l}\mid z_{i1l}\simind \textsc{Bern}(\{-1,1\};1/(1+\exp\{-z_{i1l}\}), \text{ if }l=1,\ldots,\floor{k/2}&\\
 &Z_{i2l}=z_{i1l}, \text{ if }l=\floor{k/2}+1,\ldots,k&\\
 &Y_{i1}\mid \bm{a}^*_{i1},\bm{z}_{i1},\bm{z}_{i2},\btheta_{1}^*,\btheta_{2}^{*} \simind  \textsc{N}(\btheta_{1}^{*\intercal}[\bm{z}_{i1}\times \bm{a}^*_{i1}]-\mathbb{E}[Y_{i2}^\opt],1)&\text{(payoffs)}\\
&Y_{i2}\mid \bm{a}^*_{i1},\bm{a}^*_{i2},\bm{z}_{i2},\btheta_{2}^{*} \simind \textsc{N}(\btheta_{2}^{*\intercal}[\bm{z}_{i2}\times(\bm{a}^*_{i1},\bm{a}^*_{i2})],1), &
    \end{align*}
where $\mathbb{E}[Y_{i2}^\opt]=\max_{\bm{a}^*\in \mathcal{D}_{T-1}}(\btheta_{2}^{*\intercal}[\bm{z}_{i2}\times(\bm{a}^*_{i1},\bm{a}^*)])$; $\mathbb{I}_{\{t\}}(A_{ij})$ denotes the indicator function, equal to $1$ if $A_{ij}=t$ and 0 otherwise; $\textsc{Unif}(\mathcal{A}_j)$ denotes the discrete uniform distribution on $\mathcal{A}_j$, and $\bm{a}^*_{ij}=(a_{ij1}^*,\ldots,a_{ij(T-1)}^*)\in \mathcal{D}_{T-1}$ indicates the vector of dummy variables encoding the treatment assigned to the $i$th individual at the $j$th stage. Moreover, $\mathcal{D}_{T-1}$ denotes the set of $(T-1)$-dimensional vectors consisting of at most one element equal to one and the remaining equal to zero. The vectors $\btheta_1^*$ and $\btheta_2^*$ are simulated by following \eqref{eq:dgp_ss} with $\rho^*=0.6$ as a result of $(a^*,b^*)=(1/5,7/15)$, and with $d_1=T(k+1)$ and $d_2=(2T-1)(k+1)$. The expected proportion of stage-specific significant predictors is 0.3.

\subsubsection{Third experiment}\label{sec:dgp3}
We consider an additional simulation scenario in which the payoffs \(y_{i1}\) and \(y_{i2}\) are not generated from normal linear regressions. Instead, they arise from a more complex partially linear data generating process in which the effect of half of the predictors is linear, while the effect of the remaining half is nonlinear. Specifically, synthetic data consisting of stage-specific covariates \(\bm{z}_{ij}\), treatments \(a_{ij}\), and stage-specific payoffs \(y_{ij}\) are generated according to the following sequence of conditional distributions:
\begin{equation}\label{eq:dgp3}
\begin{aligned}
  &A_{ij}\simiid \textsc{Bern}(\{0,1\};0.5)&\text{(treatments)} \\
  &Z_{i1l}\simiid \textsc{Bern}(\{0,1\};0.5), \text{ for }l=1,\ldots,\floor{k/2}&\text{(discrete covariates)}\\
   &Z_{i2l}\mid z_{i1l}\simind \textsc{Bern}(\{0,1\};1/(1+\exp\{-z_{i1l}\}), \text{ if }l=1,\ldots,\floor{k/4}&\\
 &Z_{i2l}=z_{i1l}, \text{ if }l=\floor{k/4}+1,\ldots,\floor{k/2}&\\
   &Z_{i1l}\simiid \textsc{N}(0,1), \text{ for }l=\floor{k/2}+1,\ldots,k&\text{(continuous covariates)} \\
   &Z_{i2l}\mid z_{i1l}\simind \textsc{N}(z_{i1l},1), \text{ if }l=\floor{k/2}+1,\ldots,\floor{3k/4} & \\ 
 &Z_{i2l}=z_{i1l}, \text{ if }l=\floor{3k/4}+1,\ldots,k&\\
 &Y_{i1}\mid a_{i1},\bm{z}_{i1},\bm{z}_{i2},\btheta_{1}^*,\btheta_{2}^{*} \simind  \textsc{N}(\btheta_{1}^{*\intercal}h\left([\bm{z}_{i1}\times a_{i1}]\right)-\mathbb{E}[Y_{i2}^\opt],1)&\text{(payoffs)}\\
&Y_{i2}\mid a_{i1},a_{i2},\bm{z}_{i2},\btheta_{2}^{*} \simind \textsc{N}(\btheta_{2}^{*\intercal}h\left([\bm{z}_{i2}\times(a_{i1},a_{i2})]\right),1), &
\end{aligned}
\end{equation}
where 
\begin{equation}\label{eq:h}
h(\tilde{z}_1,\ldots,\tilde{z}_d)=\left(\mathbb{I}_{[0.5,\infty)}(\tilde{z}_1),\ldots,\mathbb{I}_{[0.5,\infty)}(\tilde{z}_d)\right).
\end{equation}
According to \eqref{eq:dgp3} and similarly to \eqref{eq:dgp1}, the vector $\bm{z}_{ij}$ is generated to mimic a realistic setting in which roughly half of the covariates remain constant across stages, while the others evolve from stage 1 to stage 2. Moreover, a tree-like mechanism such as that induced by the function \(h\) in \eqref{eq:h} yields a setting in which the conditional expected payoffs are piecewise constant, with discontinuities occurring at the indicator thresholds of the covariates and at the interaction terms between treatment and covariates. Consequently, the treatment effect can vary across these regions whenever interaction terms are present, and the optimal treatment rule consists of decision boundaries at the indicator thresholds. This applies to both discrete and continuous covariates. We note, however, that the transformation induced by \(h\) is non-trivial only for continuous covariates and the corresponding interaction terms, since for discrete covariates and their interactions the mapping \(\tilde{z}\mapsto \mathbb{I}_{[0.5,\infty)}(\tilde{z})\) coincides with the identity. Also in this case, the vectors $\btheta_1^*$ and $\btheta_2^*$ are simulated by following \eqref{eq:dgp_ss} with $\rho^*\in\{0.3,0.6,0.9\}$ as a result of $(a^*,b^*)\in\{(7/10,49/30),(1/5,7/15),(1/30,7/90)\}$, and with $d_1=T(k+1)$ and $d_2=(2T-1)(k+1)$. 

\subsection{Implementation of alternative methods}
We provide details on the implementation of the alternative methods used for comparison in the data analyses of Sections \ref{section4} and \ref{section5}.
\label{sec:alt_meth}\begin{itemize}
    \item[i)] Q-learning with lasso is implemented by selecting the optimal penalty parameter through 4-fold cross-validation using the \texttt{glmnet} R package \citep{Fri21}.
\item[ii)] Q-learning with RF is implemented using the \texttt{R} package \texttt{randomForest} \citep{Bre18} with default settings, except that \texttt{ntree} is set to 1000. At each stage, predictors are considered significant if they rank among the top third in importance as determined by the \texttt{randomForest} function.
\item[iii)]
AOWL is implemented using the \texttt{DTRlearn2} R package with default settings, except that \texttt{loss} is set to \texttt{l2} and \texttt{augment} is set to \texttt{TRUE}, following the package guidelines.
\end{itemize}

\subsection{Simulation results for the regime $n>d_2$}\label{sec:n_geq_d2}

The next tables report the results of the same analysis displayed in Section \ref{PART1}, this time for data simulated under the regime with $n>d_2$.

\begin{table}[H]
\caption{First experiment: simulated data with $n$ larger than $d_2$. For stage 2 and stage 1, three summaries of the accuracy in selecting significant variables (proportion of FN, proportion of FP, and F$_1$ score), for DSS, ISS, Q-learning with lasso (QL), and Q-learning with RF (QRF), across the nine scenarios obtained by specifying $(k,n)\in\{(10,50),(20,100),(30,150)\}$  and $\rho^*\in\{0.3,0.6,0.9\}$. Results are based on 100 replicated datasets. }
\centering
\resizebox{1\textwidth}{!}{
\begin{tabular}{cccccccccccccccccccccc}
\Hline
\multicolumn{21}{c}{Stage 2}                                                                                                                         \\ \hline
\multirow{2}{*}{$n$} & \multirow{2}{*}{$d_2$} & \multirow{2}{*}{$\rho^*$} &  & \multicolumn{4}{c}{FN} &  & \multicolumn{4}{c}{FP} &  & \multicolumn{4}{c}{F$_1$ score} \\ \cline{5-8} \cline{10-13} \cline{15-18}  
&  &  & & DSS    & ISS   & QL  &QRF  &  & DSS    & ISS   & QL  &QRF  &  & DSS      & ISS     & QL   &QRF    \\ \hline 
         &                          & 0.3 &   &	0.018	&	0.008	&	0.027	&	0.401 &	&	0.117	&	0.048	&	0.217	&	0.172	& &	0.896	&	0.954	&	0.804	&	0.594	\\
         &                          & 0.6 &   &	0.006	&	0.008	&	0.026	&	0.391&	&	0.071	&	0.049	&	0.225	&	0.169&	&	0.935	&	0.951	&	0.806	&	0.601	\\
\multirow{-3}{*}{50}            & \multirow{-3}{*}{33} & 0.9 &  &	0.004	&	0.010	&	0.029	&	0.377	& &	0.042	&	0.047	&	0.244	&	0.165 &	&	0.962	&	0.955	&	0.790	&	0.612	\\ [5pt]
         &                          & 0.3 &   &	0.008	&	0.005	&	0.010	&	0.395 &	&	0.041	&	0.021	&	0.185	&	0.179 &	&	0.954	&	0.975	&	0.825	&	0.587	\\
         &                          & 0.6 &   &		0.005	&	0.003	&	0.014	&	0.399&	&	0.033	&	0.019	&	0.180	&	0.182	 & &	0.962	&	0.977	&	0.822	&	0.597	\\
\multirow{-3}{*}{100}           & \multirow{-3}{*}{63}& 0.9 &  &	0.002	&	0.006	&	0.013	&	0.385	& &	0.021	&	0.026	&	0.192	&	0.175 &	&	0.977	&	0.971	&	0.820	&	0.569	\\ [5pt]
        &                          & 0.3 &   &		0.003	&	0.003	&	0.007	&	0.426	& &	0.030	&	0.019	&	0.184	&	0.185	& &	0.968	&	0.979	&	0.828	&	0.569	\\
         &                          & 0.6 &   &	0.004	&	0.005	&	0.010	&	0.440&	&	0.022	&	0.017	&	0.181	&	0.189 &	&	0.975	&	0.980	&	0.830	&	0.559	\\
\multirow{-3}{*}{150}           & \multirow{-3}{*}{93} & 0.9 &  &	0.001	&	0.003	&	0.007	&	0.430	& &	0.012	&	0.017	&	0.173	&	0.189	&	&0.986	&	0.981	&	0.837	&	0.567	\\  \hline
\multicolumn{21}{c}{Stage 1}                                                                                                                               \\ \hline
\multirow{2}{*}{$n$} & \multirow{2}{*}{$d_1$} & \multirow{2}{*}{$\rho^*$} &  & \multicolumn{4}{c}{FN} &  & \multicolumn{4}{c}{FP} &  & \multicolumn{4}{c}{F$_1$ score} \\ \cline{5-8} \cline{10-13} \cline{15-18} 
 &  &  & & DSS    & ISS   & QL  &QRF &  & DSS    & ISS   & QL  &QRF &  & DSS      & ISS     & QL   &QRF       \\ \hline
         &                          & 0.3 &   &	0.018	&	0.015	&	0.036	&	0.324 &	&	0.225	&	0.011	&	0.177	&	0.172	 & &	0.783	&	0.982	&	0.825	&	0.623	\\
         &                          & 0.6 &   &	0.006	&	0.010	&	0.019	&	0.307 &	&	0.132	&	0.017	&	0.182	&	0.159 &	&	0.878	&	0.981	&	0.838	&	0.648	\\
\multirow{-3}{*}{50}            & \multirow{-3}{*}{22}& 0.9 &  & 0.004	&	0.025	&	0.051	&	0.348 &	&	0.035	&	0.015	&	0.194	&	0.165	& &	0.967	&	0.973	&	0.822	&	0.617	\\ [5pt]
         &                          & 0.3 &   &	0.008	&	0.014	&	0.014	&	0.355 &	&	0.103	&	0.002	&	0.135	&	0.179 &	&	0.895	&	0.990	&	0.869	&	0.613	\\
         &                          & 0.6 &   &	0.005	&	0.012	&	0.016	&	0.345 &	&	0.065	&	0.004	&	0.116	&	0.182 &	&	0.937	&	0.990	&	0.881	&	0.625	\\
\multirow{-3}{*}{100}           & \multirow{-3}{*}{42}& 0.9 &  &	0.002	&	0.017	&	0.018	&	0.342	& &	0.021	&	0.004	&	0.110	&	0.175	& &	0.978	&	0.988	&	0.882	&	0.622	\\ [5pt]
         &                          & 0.3 &   &	0.003	&	0.003	&	0.005	&	0.360 &	&	0.065	&	0.002	&	0.124	&	0.168 &	&	0.936	&	0.996	&	0.885	&	0.620	\\
         &                          & 0.6 &   &	0.004	&	0.008	&	0.010	&	0.377 &	&	0.036	&	0.001	&	0.118	&	0.175 &	&	0.962	&	0.995	&	0.881	&	0.604	\\
\multirow{-3}{*}{150}           & \multirow{-3}{*}{62}& 0.9 &  &	0.001	&	0.007	&	0.010	&	0.379 &	&	0.012	&	0.000	&	0.108	&	0.172	& &	0.988	&	0.996	&	0.892	&	0.608	\\ \hline
\end{tabular}
}
\label{TABLE first experiment_1_1}
\end{table}

\begin{table}[H]
\centering
\caption{First experiment: simulated data with $n$ larger than $d_2$. For stage 2, stage 1 and overall, two measures of prediction accuracy (MAE and ER), for DSS, ISS, Q-learning with lasso (QL), Q-learning with RF (QRF) and AOWL, across the nine scenarios obtained by specifying $(k,n)\in\{(10,50),(20,100),(30,150)\}$ and $\rho^*\in\{0.3,0.6,0.9\}$. Results are based on 100 replicated datasets.}
\resizebox{1\textwidth}{!}{
\begin{tabular}{ccccccccccccccc}
\Hline
\multicolumn{15}{c}{Stage 2}                                                                                \\ \hline
         \multirow{2}{*}{$n$} &   \multirow{2}{*}{$d_2$}                               &   \multirow{2}{*}{$\rho^*$}   & & \multicolumn{5}{c}{MAE}                                            &                              & \multicolumn{5}{c}{ER}                                                           \\ \cline{5-9} \cline{11-15} 
 &   & & & DSS                             & ISS                           & QL     &QRF & AOWL  &                    & DSS                           & ISS                           & QL    &QRF & AOWL                        \\ \hline
\multicolumn{1}{c}{}                                   & \multicolumn{1}{c}{}             &	0.3	&	&	0.018	&	0.018	&	0.045	&	0.647	&	0.474	&	&	0.041	&	0.042	&	0.054	&	0.234	&	0.197	\\	
\multicolumn{1}{c}{}                                   & \multicolumn{1}{c}{}  &	0.6	&	&	0.018	&	0.020	&	0.033	&	0.589	&	0.456	&	&	0.045	&	0.047	&	0.052	&	0.225	&	0.191	\\	
\multicolumn{1}{c}{\multirow{-3}{*}{50}}               & \multicolumn{1}{c}{\multirow{-3}{*}{33}}     &	0.9	&	&	0.014	&	0.015	&	0.032	&	0.621	&	0.470	&	&	0.029	&	0.029	&	0.041	&	0.227	&	0.188	\\	[5pt]
\multicolumn{1}{c}{}                                   & \multicolumn{1}{c}{} &	0.3	&	&	0.009	&	0.009	&	0.014	&	0.955	&	0.571	&	&	0.020	&	0.019	&	0.025	&	0.249	&	0.192	\\	
\multicolumn{1}{c}{}                                   & \multicolumn{1}{c}{} &	0.6	&	&	0.008	&	0.007	&	0.019	&	0.928	&	0.556	&	&	0.023	&	0.022	&	0.030	&	0.250	&	0.187	\\	
\multicolumn{1}{c}{\multirow{-3}{*}{100}}               & \multicolumn{1}{c}{\multirow{-3}{*}{63}}  & 	0.9	&	&	0.007	&	0.008	&	0.014	&	0.960	&	0.544	&	&	0.025	&	0.026	&	0.031	&	0.248	&	0.184	\\	[5pt]
\multicolumn{1}{c}{}                                   & \multicolumn{1}{c}{}             &	0.3	&	&	0.008	&	0.007	&	0.009	&	1.283	&	0.669	&	&	0.020	&	0.020	&	0.021	&	0.260	&	0.192	\\	
\multicolumn{1}{c}{}                                   & \multicolumn{1}{c}{}             &	0.6	&	&	0.006	&	0.007	&	0.011	&	1.341	&	0.666	&	&	0.018	&	0.019	&	0.023	&	0.275	&	0.191	\\	
\multicolumn{1}{c}{\multirow{-3}{*}{150}}               & \multicolumn{1}{c}{\multirow{-3}{*}{93}} &	0.9	&	&	0.006	&	0.006	&	0.009	&	1.329	&	0.704	&	&	0.017	&	0.017	&	0.021	&	0.263	&	0.188	\\	\hline
\multicolumn{15}{c}{Stage 1} 
\\
\hline
\multirow{2}{*}{$n$} & \multirow{2}{*}{$d_1$} & \multirow{2}{*}{$\rho^*$} & & \multicolumn{5}{c}{MAE}                                       &                                   & \multicolumn{5}{c}{ER}                                                                        \\ \cline{5-9} \cline{11-15} 
 &   & & & DSS                             & ISS                           & QL     &QRF & AOWL  &                    & DSS                           & ISS                           & QL    &QRF & AOWL                                            \\ \hline
\multicolumn{1}{c}{}                                   & \multicolumn{1}{c}{}             &	0.3	&	&	0.021	&	0.020	&	0.035	&	0.579	&	0.371	&	&	0.059	&	0.057	&	0.067	&	0.236	&	0.178	\\	
\multicolumn{1}{c}{}                                   & \multicolumn{1}{c}{}  &	0.6	&	&	0.018	&	0.022	&	0.035	&	0.492	&	0.356	&	&	0.056	&	0.060	&	0.064	&	0.209	&	0.180	\\	
\multicolumn{1}{c}{\multirow{-3}{*}{50}}               & \multicolumn{1}{c}{\multirow{-3}{*}{22}}     &	0.9	&	&	0.025	&	0.034	&	0.049	&	0.553	&	0.368	&	&	0.045	&	0.052	&	0.066	&	0.209	&	0.164	\\	[5pt]
\multicolumn{1}{c}{}                                   & \multicolumn{1}{c}{} &	0.3	&	&	0.012	&	0.010	&	0.015	&	0.922	&	0.465	&	&	0.024	&	0.023	&	0.027	&	0.242	&	0.168	\\	
\multicolumn{1}{c}{}                                   & \multicolumn{1}{c}{} &	0.6	&	&	0.013	&	0.012	&	0.023	&	0.931	&	0.445	&	&	0.033	&	0.035	&	0.044	&	0.251	&	0.179	\\	
\multicolumn{1}{c}{\multirow{-3}{*}{100}}               & \multicolumn{1}{c}{\multirow{-3}{*}{42}}  & 	0.9	&	&	0.009	&	0.013	&	0.018	&	0.976	&	0.482	&	&	0.026	&	0.028	&	0.032	&	0.259	&	0.181	\\	[5pt]
\multicolumn{1}{c}{}                                   & \multicolumn{1}{c}{}             &	0.3	&	&	0.011	&	0.008	&	0.011	&	1.269	&	0.576	&	&	0.024	&	0.021	&	0.024	&	0.259	&	0.172	\\	
\multicolumn{1}{c}{}                                   & \multicolumn{1}{c}{}             &	0.6	&	&	0.010	&	0.010	&	0.015	&	1.285	&	0.514	&	&	0.023	&	0.022	&	0.027	&	0.266	&	0.166	\\	
\multicolumn{1}{c}{\multirow{-3}{*}{150}}               & \multicolumn{1}{c}{\multirow{-3}{*}{62}} &	0.9	&	&	0.008	&	0.009	&	0.013	&	1.298	&	0.541	&	&	0.020	&	0.021	&	0.026	&	0.268	&	0.166	\\	\hline

\multicolumn{15}{c}{Overall}                                                             \\ \hline
\multirow{2}{*}{$n$} & \multirow{2}{*}{$(d_1,d_2)$} & \multirow{2}{*}{$\rho^*$} & &  \multicolumn{5}{c}{MAE}                                        &                                  & \multicolumn{5}{c}{ER}                                                                        \\ \cline{5-9} \cline{11-15}                                          &   & & & DSS                             & ISS                           & QL     &QRF & AOWL  &                    & DSS                           & ISS                           & QL    &QRF & AOWL                                        \\ \hline
\multicolumn{1}{c}{}                                   & \multicolumn{1}{c}{}             &	0.3	&	&	0.040	&	0.039	&	0.080	&	1.227	&	0.845	&	&	0.098	&	0.096	&	0.119	&	0.415	&	0.339	\\	
\multicolumn{1}{c}{}                                   & \multicolumn{1}{c}{}  &	0.6	&	&	0.037	&	0.042	&	0.068	&	1.081	&	0.812	&	&	0.100	&	0.103	&	0.113	&	0.390	&	0.341	\\	
\multicolumn{1}{c}{\multirow{-3}{*}{50}}               & \multicolumn{1}{c}{\multirow{-3}{*}{(22, 33)}}     &	0.9	&	&	0.039	&	0.049	&	0.081	&	1.174	&	0.837	&	&	0.073	&	0.081	&	0.105	&	0.393	&	0.318	\\	[5pt]
\multicolumn{1}{c}{}                                   & \multicolumn{1}{c}{} &	0.3	&	&	0.021	&	0.018	&	0.029	&	1.877	&	1.036	&	&	0.044	&	0.042	&	0.051	&	0.433	&	0.328	\\	
\multicolumn{1}{c}{}                                   & \multicolumn{1}{c}{} &	0.6	&	&	0.021	&	0.020	&	0.042	&	1.859	&	1.001	&	&	0.056	&	0.057	&	0.074	&	0.438	&	0.333	\\	
\multicolumn{1}{c}{\multirow{-3}{*}{100}}               & \multicolumn{1}{c}{\multirow{-3}{*}{(42, 63)}}  & 	0.9	&	&	0.016	&	0.021	&	0.032	&	1.936	&	1.026	&	&	0.051	&	0.055	&	0.061	&	0.443	&	0.331	\\	[5pt]
\multicolumn{1}{c}{}                                   & \multicolumn{1}{c}{}             &	0.3	&	&	0.018	&	0.015	&	0.019	&	2.552	&	1.245	&	&	0.043	&	0.040	&	0.045	&	0.450	&	0.331	\\	
\multicolumn{1}{c}{}                                   & \multicolumn{1}{c}{}             &	0.6	&	&	0.016	&	0.017	&	0.026	&	2.626	&	1.180	&	&	0.041	&	0.041	&	0.049	&	0.467	&	0.323	\\	
\multicolumn{1}{c}{\multirow{-3}{*}{150}}               & \multicolumn{1}{c}{\multirow{-3}{*}{(62, 93)}} &	0.9	&	&	0.014	&	0.015	&	0.021	&	2.627	&	1.245	&	&	0.036	&	0.037	&	0.046	&	0.462	&	0.321	\\	\hline
\end{tabular}}
\label{TABLE first experiment_2_2}
\end{table}

\subsection{Sensitivity analysis on $r$}\label{sec:sensitivity}
The ratio $r$ between the variances of the spike and slab components in the dependent spike-and-slab model plays a central role in enabling variable selection across stages. To investigate its impact, we conducted a simple sensitivity analysis across different values of $r$, which regulates the extent of prior shrinkage. The results, based on a specific scenario from the first simulation study, are reported in Table \ref{tab:rho_r_results}. Results demonstrate that our method remains robust across a range of $r$ values, as indicated by the corresponding $\text{F}_1$ scores and ER.
\begin{table}[H]
\centering
\caption{First experiment, sensitivity analysis. For stage 2 and stage 1, F$_1$ score and ER, for DSS with $k=10$, $n=25$  and $\rho^*\in\{0.3,0.6,0.9\}$, and different values of $r\in\{0.01,0.001,0.0001\}$. Results are based on 100 replicated datasets. }
\label{tab:rho_r_results}
\begin{tabular}{clccccc}
\Hline
&  & \multicolumn{2}{c}{Stage 2} & & \multicolumn{2}{c}{Stage 1} \\
\cline{3-4} \cline{6-7}
 \multirow{1}{*}{$\rho^*$}  & \multirow{1}{*}{$r$} & $\text{F}_1$ score & ER & & $\text{F}_1$ score & ER \\
\hline
\multirow{3}{*}{0.3}  & 0.01   & 0.711 & 0.184 & & 0.627 & 0.142 \\
     & 0.001  & 0.704 & 0.136 & & 0.626 & 0.183 \\
     & 0.0001 & 0.704 & 0.135 & & 0.620 & 0.183 \\
\hline
\multirow{3}{*}{0.6}  & 0.01   & 0.747 & 0.128 & & 0.720 & 0.176 \\
     & 0.001  & 0.744 & 0.130 & & 0.719 & 0.179 \\
     & 0.0001 & 0.741 & 0.121 & & 0.706 & 0.181 \\
\hline
\multirow{3}{*}{0.9}  & 0.01   & 0.786 & 0.152 & & 0.810 & 0.168 \\
     & 0.001  & 0.770 & 0.148 & & 0.806 & 0.175 \\
     & 0.0001 & 0.765 & 0.152 & & 0.786 & 0.177 \\
\hline
\end{tabular}
\end{table}

\subsection{Experiment beyond linearity}\label{sec:nonlin}
We report the results of an additional simulation experiment in which the data are generated from a process where the relationship between the stage-specific expected payoffs and half of the predictors is linear, while the effect of the remaining predictors is nonlinear. Specifically, the data are generated according to the process described in Section \ref{sec:dgp3}.\\ 
We begin by considering the same scenarios as in the first experiment under the regime $n > d_2$ (see Section \ref{sec:n_geq_d2}), namely $(k,n)\in\{(10,50),(20,100),(30,150)\}$. The results are presented in Tables \ref{TABLE third experiment 1} and \ref{TABLE third experiment 2}. Compared with the corresponding results in Tables \ref{TABLE first experiment_1_1} and \ref{TABLE first experiment_2_2}, which are based on data generated from a linear process, the partially linear data generating process appears substantially more challenging. This is particularly evident for methods such as DSS, ISS, and Q-learning with lasso, which rely on linearity assumptions and exhibit poorer performance in terms of metrics such as the F$_1$-score and the error rate. In contrast, these same metrics indicate that Q-learning with RF is less affected by the nonlinearity of the data generating process, making it a competitive alternative in this setting. A similar conclusion applies to AOWL in terms of predictive accuracy.

\begin{table}[H]
\caption{Third experiment: data simulated from a partially linear data generating process. For stage 2 and stage 1, three summaries of the accuracy in selecting significant variables (proportion of FN, proportion of FP, and F$_1$ score), for DSS, ISS, Q-learning with lasso (QL), and Q-learning with RF (QRF), across the nine scenarios obtained by specifying $(k,n)\in\{(10,50),(20,100),(30,150)\}$  and $\rho^*\in\{0.3,0.6,0.9\}$. Results are based on 100 replicated datasets.}
\centering
\resizebox{1\textwidth}{!}{
\begin{tabular}{cccccccccccccccccccccc}
\Hline
\multicolumn{21}{c}{Stage 2}\\ \hline
\multirow{2}{*}{$n$} & \multirow{2}{*}{$d_2$} & \multirow{2}{*}{$\rho^*$} & & \multicolumn{4}{c}{FN} & & \multicolumn{4}{c}{FP} & & \multicolumn{4}{c}{F$_1$ score} \\ \cline{5-8} \cline{10-13} \cline{15-18} 
& & & & DSS & ISS  & QL &QRF & & DSS & ISS & QL &QRF & & DSS  & ISS & QL &QRF \\ \hline
 & & 0.3 & & 0.322 & 0.346 & 0.389 & 0.432 & & 0.141 & 0.115 & 0.184 & 0.195 & & 0.666 & 0.672 & 0.574 & 0.548 \\
 & & 0.6 & & 0.308 & 0.351 & 0.404 & 0.410& & 0.153 & 0.123 & 0.176 & 0.189& & 0.664 & 0.660 & 0.561 & 0.561 \\
\multirow{-3}{*}{50} & \multirow{-3}{*}{33} & 0.9 & & 0.313 & 0.374 & 0.480 & 0.426 & & 0.129 & 0.114 & 0.164 & 0.190 & & 0.694 & 0.657 & 0.523 & 0.555 \\ [5pt]
 & & 0.3 & & 0.296 & 0.312 & 0.369 & 0.453 & & 0.168 & 0.139 & 0.172 & 0.200 & & 0.671 & 0.685 & 0.606 & 0.536 \\
 & & 0.6 & & 0.262 & 0.310 & 0.354 & 0.434& & 0.158 & 0.126 & 0.190 & 0.195& & 0.698 & 0.689 & 0.601 & 0.548 \\
\multirow{-3}{*}{100} & \multirow{-3}{*}{63}& 0.9 & & 0.231 & 0.282 & 0.375 & 0.435 & & 0.122 & 0.120 & 0.160 & 0.198 & & 0.747 & 0.713 & 0.603 & 0.544 \\ [5pt]
& & 0.3 & & 0.251 & 0.274 & 0.332 & 0.453 & & 0.158 & 0.121 & 0.176 & 0.199 & & 0.705 & 0.720 & 0.635 & 0.537 \\
&  & 0.6 & & 0.223 & 0.262 & 0.331 & 0.464 & & 0.164 & 0.142 & 0.178 & 0.200& & 0.725 & 0.718 & 0.640 & 0.533 \\
\multirow{-3}{*}{150} & \multirow{-3}{*}{93} & 0.9 & & 0.214 & 0.274 & 0.336 & 0.470 & & 0.136 & 0.134 & 0.168 & 0.207 & &0.750 & 0.713 & 0.637 & 0.521 \\ \hline
\multicolumn{21}{c}{Stage 1}  \\ \hline
\multirow{2}{*}{$n$} & \multirow{2}{*}{$d_1$} & \multirow{2}{*}{$\rho^*$} & & \multicolumn{4}{c}{FN} & & \multicolumn{4}{c}{FP} & & \multicolumn{4}{c}{F$_1$ score} \\ \cline{5-8} \cline{10-13} \cline{15-18}
& & & & DSS & ISS & QL &QRF & & DSS & ISS & QL &QRF & & DSS & ISS & QL &QRF \\ \hline
 & & 0.3 & & 0.399 & 0.534 & 0.580 & 0.387 & & 0.205 & 0.063 & 0.139 & 0.212 & & 0.546 & 0.543 & 0.420 & 0.538 \\
 & & 0.6 & & 0.317 & 0.520 & 0.543 & 0.340 & & 0.163 & 0.073 & 0.119 & 0.185 & & 0.646 & 0.558 & 0.472 & 0.603 \\
\multirow{-3}{*}{50} & \multirow{-3}{*}{22}& 0.9 & & 0.284 & 0.535 & 0.552 & 0.397 & & 0.134 & 0.074 & 0.148 & 0.195 & & 0.707 & 0.551 & 0.462 & 0.569 \\ [5pt]
 & & 0.3 & & 0.349 & 0.448 & 0.498 & 0.405 & & 0.231 & 0.086 & 0.113 & 0.192 & & 0.591 & 0.624 & 0.536 & 0.570 \\
& & 0.6 & & 0.304 & 0.457 & 0.466 & 0.386 & & 0.174 & 0.081 & 0.122 & 0.188 & & 0.656 & 0.617 & 0.564 & 0.582 \\
\multirow{-3}{*}{100} & \multirow{-3}{*}{42}& 0.9 & & 0.219 & 0.473 & 0.466 & 0.375 & & 0.122 & 0.083 & 0.129 & 0.180 & & 0.752 & 0.601 & 0.553 & 0.597 \\ [5pt]
& & 0.3 & & 0.345 & 0.497 & 0.497 & 0.438 & & 0.230 & 0.095 & 0.117 & 0.197 & & 0.597 & 0.625 & 0.543 & 0.552 \\
&  & 0.6 & & 0.257 & 0.475 & 0.476 & 0.422 & & 0.196 & 0.103 & 0.102 & 0.195 & & 0.673 & 0.643 & 0.568 & 0.557 \\
\multirow{-3}{*}{150} & \multirow{-3}{*}{62}& 0.9 & & 0.182 & 0.409 & 0.479 & 0.421 & & 0.129 & 0.103 & 0.102 & 0.197 & & 0.769 & 0.645 & 0.562 & 0.554 \\ \hline
\end{tabular}
}
\label{TABLE third experiment 1}
\end{table}

\begin{table}[H]
\centering
\caption{Third experiment: data simulated from a partially linear data generating process. For stage 2, stage 1 and overall, two measures of prediction accuracy (MAE and ER), for DSS, ISS, Q-learning with lasso (QL), Q-learning with RF (QRF) and AOWL, across the nine scenarios obtained by specifying $(k,n)\in\{(10,50),(20,100),(30,150)\}$  and $\rho^*\in\{0.3,0.6,0.9\}$. Results are based on 100 replicated datasets.}
\resizebox{1\textwidth}{!}{
\begin{tabular}{ccccccccccccccc}
\Hline
\multicolumn{15}{c}{Stage 2} \\ \hline
 \multirow{2}{*}{$n$} &  \multirow{2}{*}{$d_2$} & \multirow{2}{*}{$\rho^*$} & & \multicolumn{5}{c}{MAE} & & \multicolumn{5}{c}{ER} \\ \cline{5-9} \cline{11-15} & & & & DSS & ISS & QL & QRF & AOWL & & DSS & ISS & QL &QRF & AOWL \\ \hline
\multicolumn{1}{c}{}  & \multicolumn{1}{c}{}  & 0.3 & & 4.458 & 4.472 & 4.628 & 4.512 & 4.473 & & 0.203 & 0.201 & 0.271 & 0.215 & 0.249 \\
\multicolumn{1}{c}{} & \multicolumn{1}{c}{} & 0.6 & & 4.280 & 4.284 & 4.469 & 4.317 & 4.287 & & 0.196 & 0.201 & 0.285 & 0.226 & 0.253 \\
\multicolumn{1}{c}{\multirow{-3}{*}{50}} & \multicolumn{1}{c}{\multirow{-3}{*}{33}} & 0.9 & & 4.413 & 4.441 & 4.624 & 4.466 & 4.517 & & 0.190 & 0.194 & 0.276 & 0.212 & 0.249 \\ [5pt]
\multicolumn{1}{c}{} & \multicolumn{1}{c}{} & 0.3 & & 6.585 & 6.576 & 6.588 & 6.554 & 6.639 & & 0.145 & 0.144 & 0.172 & 0.184 & 0.203 \\
\multicolumn{1}{c}{} & \multicolumn{1}{c}{} & 0.6 & & 6.543 & 6.543 & 6.615 & 6.605 & 6.563 & & 0.158 & 0.163 & 0.176 & 0.194 & 0.217 \\
\multicolumn{1}{c}{\multirow{-3}{*}{100}} & \multicolumn{1}{c}{\multirow{-3}{*}{63}} & 0.9 & & 6.179 & 6.172 & 6.275 & 6.285 & 6.303 & & 0.151 & 0.153 & 0.174 & 0.193 & 0.212 \\ [5pt]
\multicolumn{1}{c}{} & \multicolumn{1}{c}{} & 0.3 & & 8.014 & 7.990 & 8.034 & 8.139 & 8.097 & & 0.171 & 0.173 & 0.194 & 0.219 & 0.222 \\
\multicolumn{1}{c}{} & \multicolumn{1}{c}{} & 0.6 & & 8.163 & 8.176 & 8.262 & 8.265 & 8.258 & & 0.146 & 0.146 & 0.162 & 0.193 & 0.198 \\
\multicolumn{1}{c}{\multirow{-3}{*}{150}} & \multicolumn{1}{c}{\multirow{-3}{*}{93}} & 0.9 & & 8.391 & 8.386 & 8.411 & 8.451 & 8.445 & & 0.151 & 0.153 & 0.169 & 0.200 & 0.194 \\ \hline
\multicolumn{15}{c}{Stage 1} 
\\
\hline
\multirow{2}{*}{$n$} & \multirow{2}{*}{$d_1$} & \multirow{2}{*}{$\rho^*$} & & \multicolumn{5}{c}{MAE}                                       &                                   & \multicolumn{5}{c}{ER}                                                                        \\ \cline{5-9} \cline{11-15} 
 &   & & & DSS                             & ISS                           & QL     &QRF & AOWL  &                    & DSS                           & ISS                           & QL    &QRF & AOWL                                            \\ \hline
\multicolumn{1}{c}{}                                   & \multicolumn{1}{c}{}             & 0.3 & & 2.565 & 2.580 & 2.920 & 2.598 & 2.579 & & 0.228 & 0.238 & 0.383 & 0.234 & 0.256 \\
\multicolumn{1}{c}{}                                   & \multicolumn{1}{c}{}  & 0.6 & & 2.578 & 2.579 & 2.944 & 2.570 & 2.628 & & 0.225 & 0.231 & 0.390 & 0.217 & 0.247 \\
\multicolumn{1}{c}{\multirow{-3}{*}{50}}               & \multicolumn{1}{c}{\multirow{-3}{*}{22}}     & 0.9 & & 2.707 & 2.741 & 3.046 & 2.741 & 2.793 & & 0.192 & 0.204 & 0.362 & 0.206 & 0.248 \\ [5pt]
\multicolumn{1}{c}{}                                   & \multicolumn{1}{c}{} & 0.3 & & 3.892 & 3.890 & 4.078 & 3.919 & 3.914 & & 0.195 & 0.202 & 0.259 & 0.218 & 0.219 \\
\multicolumn{1}{c}{}                                   & \multicolumn{1}{c}{} & 0.6 & & 4.009 & 4.008 & 4.128 & 4.058 & 4.059 & & 0.176 & 0.182 & 0.232 & 0.195 & 0.209 \\
\multicolumn{1}{c}{\multirow{-3}{*}{100}}               & \multicolumn{1}{c}{\multirow{-3}{*}{42}}  & 0.9 & & 4.081 & 4.113 & 4.287 & 4.125 & 4.100 & & 0.204 & 0.217 & 0.253 & 0.237 & 0.239 \\ [5pt]
\multicolumn{1}{c}{}                                   & \multicolumn{1}{c}{}             & 0.3 & & 4.811 & 4.806 & 4.986 & 4.869 & 4.896 & & 0.181 & 0.183 & 0.219 & 0.210 & 0.207 \\
\multicolumn{1}{c}{}                                   & \multicolumn{1}{c}{}             & 0.6 & & 4.766 & 4.772 & 4.846 & 4.787 & 4.851 & & 0.165 & 0.174 & 0.204 & 0.199 & 0.197 \\
\multicolumn{1}{c}{\multirow{-3}{*}{150}}               & \multicolumn{1}{c}{\multirow{-3}{*}{62}} & 0.9 & & 4.979 & 5.029 & 5.149 & 5.083 & 5.071 & & 0.161 & 0.174 & 0.206 & 0.201 & 0.200 \\ \hline

\multicolumn{15}{c}{Overall} \\ \hline
\multirow{2}{*}{$n$} & \multirow{2}{*}{$(d_1,d_2)$} & \multirow{2}{*}{$\rho^*$} & & \multicolumn{5}{c}{MAE} & & \multicolumn{5}{c}{ER} \\ \cline{5-9} \cline{11-15} & & & & DSS & ISS & QL & QRF & AOWL & & DSS & ISS & QL &QRF & AOWL \\ \hline
\multicolumn{1}{c}{} & \multicolumn{1}{c}{} & 0.3 & & 5.389 & 5.436 & 5.786 & 5.449 & 5.425 & & 0.385 & 0.392 & 0.549 & 0.404 & 0.443 \\
\multicolumn{1}{c}{} & \multicolumn{1}{c}{} & 0.6 & & 5.084 & 5.110 & 5.532 & 5.112 & 5.137 & & 0.379 & 0.386 & 0.566 & 0.395 & 0.434 \\
\multicolumn{1}{c}{\multirow{-3}{*}{50}} & \multicolumn{1}{c}{\multirow{-3}{*}{(22, 33)}} & 0.9 & & 5.137 & 5.176 & 5.523 & 5.216 & 5.288 & & 0.346 & 0.358 & 0.530 & 0.375 & 0.432 \\ [5pt]
\multicolumn{1}{c}{} & \multicolumn{1}{c}{} & 0.3 & & 7.805 & 7.789 & 7.911 & 7.796 & 7.903 & & 0.310 & 0.316 & 0.385 & 0.362 & 0.376 \\
\multicolumn{1}{c}{} & \multicolumn{1}{c}{} & 0.6 & & 7.630 & 7.615 & 7.800 & 7.757 & 7.682 & & 0.303 & 0.313 & 0.367 & 0.348 & 0.377 \\
\multicolumn{1}{c}{\multirow{-3}{*}{100}} & \multicolumn{1}{c}{\multirow{-3}{*}{(42, 63)}} & 0.9 & & 7.450 & 7.473 & 7.699 & 7.616 & 7.531 & & 0.326 & 0.337 & 0.382 & 0.386 & 0.399 \\ [5pt]
\multicolumn{1}{c}{} & \multicolumn{1}{c}{} & 0.3 & & 9.391 & 9.350 & 9.516 & 9.487 & 9.523 & & 0.320 & 0.325 & 0.368 & 0.381 & 0.381 \\
\multicolumn{1}{c}{} & \multicolumn{1}{c}{} & 0.6 & & 9.597 & 9.604 & 9.741 & 9.731 & 9.733 & & 0.288 & 0.295 & 0.334 & 0.358 & 0.357 \\
\multicolumn{1}{c}{\multirow{-3}{*}{150}} & \multicolumn{1}{c}{\multirow{-3}{*}{(62, 93)}} & 0.9 & & 9.789 & 9.805 & 9.934 & 9.907 & 9.859 & & 0.290 & 0.301 & 0.342 & 0.361 & 0.356 \\ \hline
\end{tabular}}
\label{TABLE third experiment 2}
\end{table}

We also examined how sample size affects the performance of the considered methods when data are generated using this more complex data generating process. Specifically, we set a medium level of correlation, $\rho^*=0.6$, with $k=10$, resulting in $d_1=22$ and $d_2=33$, and considered sample sizes $n \in \{25,50,100,200,400\}$. Figures~\ref{fig:F1} focus on the stage-wise F$_1$-score as a summary of the accuracy in selecting significant variables. It can be observed that when $n=25$, i.e., $n<d_2$, methods relying on linearity assumptions, which are violated here, perform poorly, whereas Q-learning with RF is less affected. As $n$ increases, the accuracy of the former methods improves more rapidly than that of Q-learning with RF, which does not appear to be particularly effective for variable selection in large samples.

\begin{figure}[htbp]
    \centering
    
    \includegraphics[width=\textwidth]{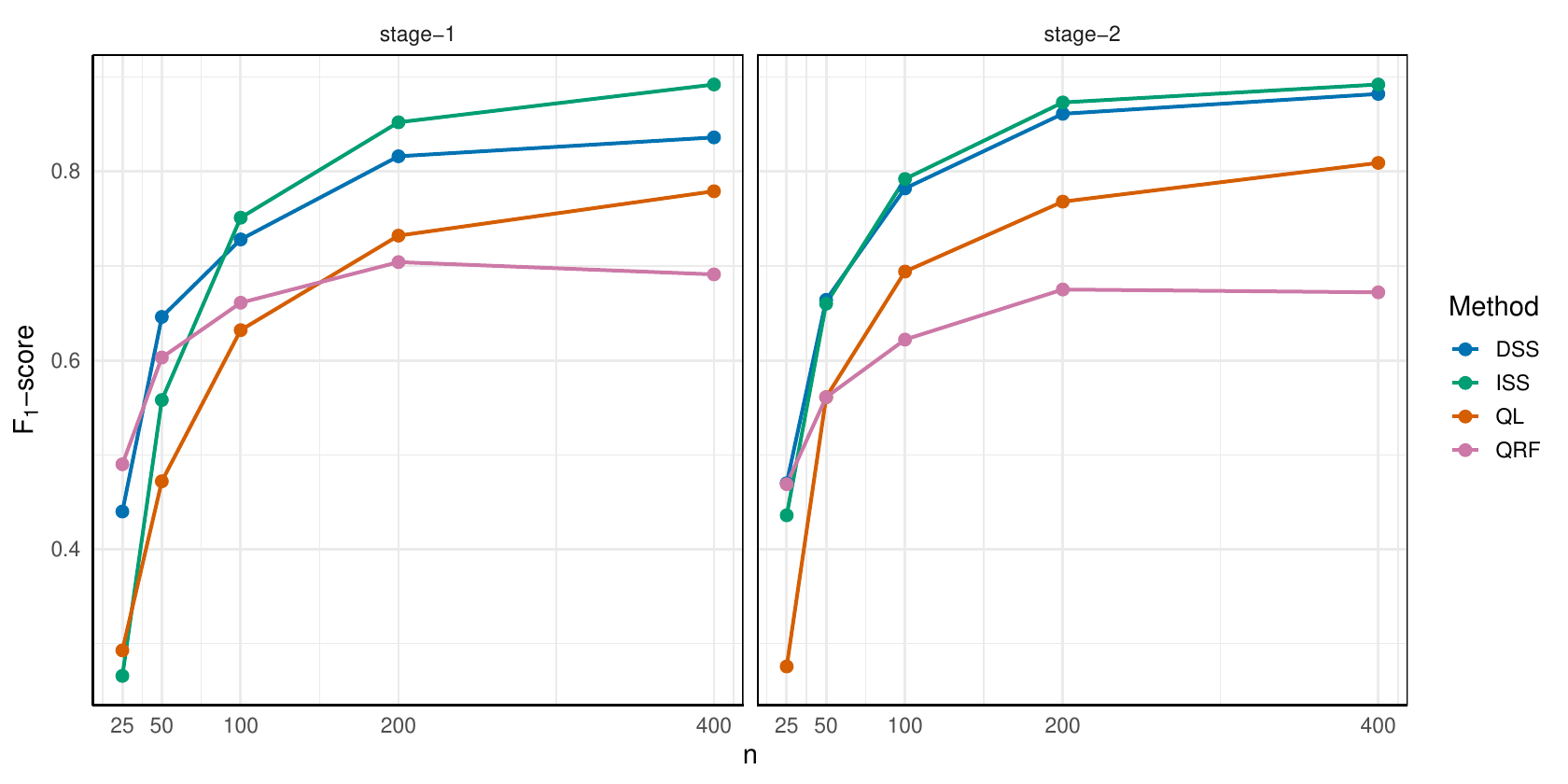}
    
    \caption{Third experiment: data simulated from a partially linear data-generating process. Stage-wise F$_1$-scores, summarizing the accuracy in selecting significant variables, are shown as a function of sample size $n \in \{25,50,100,200,400\}$ for DSS, ISS, Q-learning with lasso (QL), and Q-learning with RF (QRF). Parameters were set to $k=10$ and $\rho^*=0.6$. Results are based on 100 replicated datasets.}
    \label{fig:F1}
\end{figure}

A different aspect is captured by the error rate, which measures prediction error. Figure~\ref{fig:ER} shows that, in these terms, Q-learning with RF and AOWL remain highly competitive across all considered sample sizes, with performance very similar to the two BAL options.

\begin{figure}[htbp]
    \centering
    
    \includegraphics[width=0.675\textwidth]{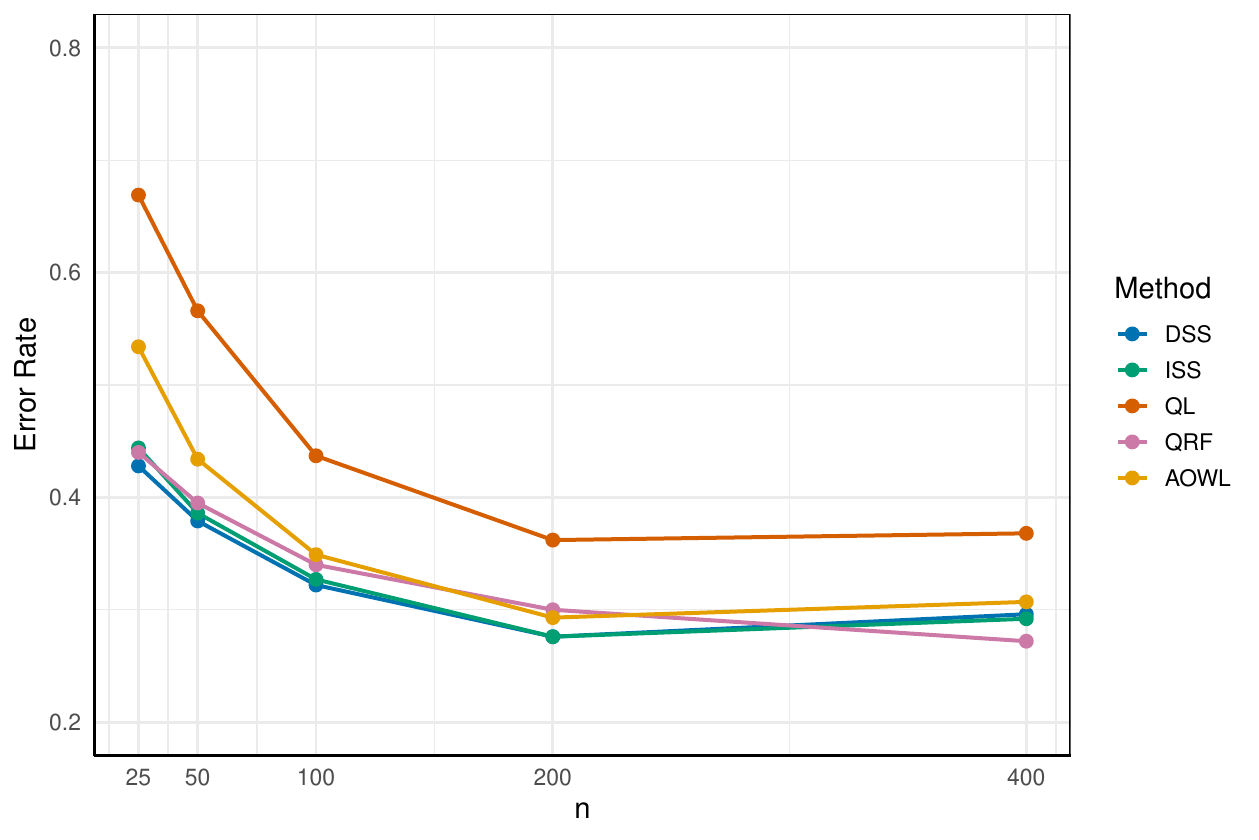}
    
    \caption{Third experiment: data simulated from a partially linear data-generating process. The error rate, measure of overall prediction accuracy, is shown as a function of sample size $n \in \{25,50,100,200,400\}$ for DSS, ISS, Q-learning with lasso (QL), Q-learning with RF (QRF), and AOWL. Parameters were set to $k=10$ and $\rho^*=0.6$. Results are based on 100 replicated datasets.}
    \label{fig:ER}
\end{figure}

\subsection{Violation of the positivity assumption}\label{sec:appPOS}
To assess our method under violations of the positivity assumption \citep[see, e.g.,][]{Wes10}, we designed as a simulation setting a minor variation of the data generating process considered for the second simulation experiment in Section \ref{PART2}. In this variation, the stage-1 treatment allocation is intentionally made highly imbalanced. Two of the four treatment options are assigned a probability of 0.01, while the other two receive a probability of 0.49, with a sample size $n=200$. In contrast, the stage 2 treatment allocation remains balanced, with each option having an equal probability of 0.25. As apparent in Tables \ref{positivity_assumption_1} and \ref{positivity_assumption_2}, violating the positivity assumption results in reduced prediction accuracy for the BAL approach, regardless of whether a DSS or a ISS prior is used. For example, we observe that the F$_1$ scores for both DSS and ISS are lower at stage 1, which is characterized by a highly imbalanced treatment allocation.\\
Although we did not explore this aspect in our study, we note that several strategies could be considered to mitigate the decline in predictive accuracy under treatment imbalance. For example, one could draw inspiration from oversampling techniques commonly used to address class imbalance in classification problems. In particular, methods such as the synthetic minority over-sampling technique \citep[SMOTE,][]{chawla2002} or random over-sampling examples \citep[ROSE,][]{menardi2014} could be adapted to artificially increase the representation of under-observed treatment groups. Concretely, one might generate synthetic pseudo-observations for individuals assigned to less frequent treatments, thus balancing the data prior to Bayesian inference. Alternatively, from a fully Bayesian standpoint, the Bayesian bootstrap \citep{rubin1981} provides a principled approach by assigning posterior weights to the observed data, effectively simulating additional information for underrepresented treatment groups.
\begin{table}[h]
\centering
\caption{Second experiment, imbalanced treatment assignment. For stage 2 and stage 1, three summaries of the accuracy in selecting significant variables (proportion of FN, proportion of FP, and F$_1$ score), for DSS and ISS, obtained by specifying $n=200$ and $T=4$. The number of individual covariates is $k=10$, while $\rho^*=0.6$. Results are based on 100 replicated datasets. }
\begin{tabular}{ccccccccc}
\Hline
\multicolumn{9}{c}{Stage 2}\\ \hline
 \multirow{2}{*}{$d_2$}   &\multicolumn{2}{c}{FN} & & \multicolumn{2}{c}{FP} & & \multicolumn{2}{c}{F$_1$ score}\\ \cline{2-3} \cline{5-6} \cline{8-9}
  & DSS & ISS  & & DSS & ISS & & DSS & ISS \\ \hline
  99 & 0.088 & 0.112 & & 0.088 & 0.066 & & 0.869 &0.877  \\ \hline
  \multicolumn{9}{c}{Stage 1}\\ \hline
 \multirow{2}{*}{$d_1$}   &\multicolumn{2}{c}{FN} & & \multicolumn{2}{c}{FP} & & \multicolumn{2}{c}{F$_1$ score} \\ \cline{2-3} \cline{5-6} \cline{8-9}
  & DSS & ISS  & & DSS & ISS & & DSS & ISS \\ \hline
  55 &0.175 &0.235 & &0.157 & 0.096 & & 0.765 & 0.775 \\ \hline
\end{tabular}
\label{positivity_assumption_1}
\end{table}
\begin{table}[h]
\centering
\caption{Second experiment, imbalanced treatment assignment. For stage 2, stage 1 and overall, two measures of prediction accuracy (MAE and ER), for DSS and ISS, obtained by specifying $n=200$ and $T=4$. The number of individual covariates is $k=10$, while $\rho^*=0.6$. Results are based on 100 replicated datasets. }
\begin{tabular}{cccccc}
\Hline
\multicolumn{6}{c}{Stage 2}\\ \hline
 \multirow{2}{*}{$d_2$}   &\multicolumn{2}{c}{MAE} & & \multicolumn{2}{c}{ER} \\ \cline{2-3} \cline{5-6} 
  & DSS & ISS  & & DSS & ISS  \\ \hline
  99 & 0.011 & 0.010 & &0.042 &0.042  \\ \hline
\multicolumn{6}{c}{Stage 1}\\ \hline
 \multirow{2}{*}{$d_1$}   &\multicolumn{2}{c}{MAE} & & \multicolumn{2}{c}{ER} \\ \cline{2-3} \cline{5-6} 
  & DSS & ISS  & & DSS & ISS  \\ \hline
 55 & 1.084  & 1.158 & & 0.294 & 0.307  \\ \hline
 \multicolumn{6}{c}{Overall}\\ \hline
 \multirow{2}{*}{$(d_1, d_2)$}   &\multicolumn{2}{c}{MAE} & & \multicolumn{2}{c}{ER} \\ \cline{2-3} \cline{5-6} 
  & DSS & ISS  & & DSS & ISS  \\ \hline
 (55, 99)  & 1.095 & 1.168 & & 0.324 & 0.336 \\ \hline
\end{tabular}
\label{positivity_assumption_2}
\end{table}

\subsection{Convergence of MCMC}
Assessing MCMC convergence is crucial for ensuring the reliability of posterior inference. To this end, we have included trace plots for key parameters---such as randomly selected non-zero regression coefficients and the hyperparameters $a$ and $b$, which govern the correlation in a regressor's significance across the two stages---as shown in Figure \ref{fig_traceplot}. Convergence diagnostics were also conducted after discarding the burn-in period (e.g., the first 5000 iterations), using the Geweke test \citep{geweke1992evaluating}, implemented via the \texttt{geweke.diag} function from the \texttt{coda} package in R. The trace plots demonstrate visually stable chains, and the corresponding Geweke's z-scores all lie within the range of -1.96 to 1.96, suggesting that the Markov chains have reached convergence.
 \begin{figure}[h]
  \centering
  \includegraphics[width=0.85\textwidth, height=7.4cm]{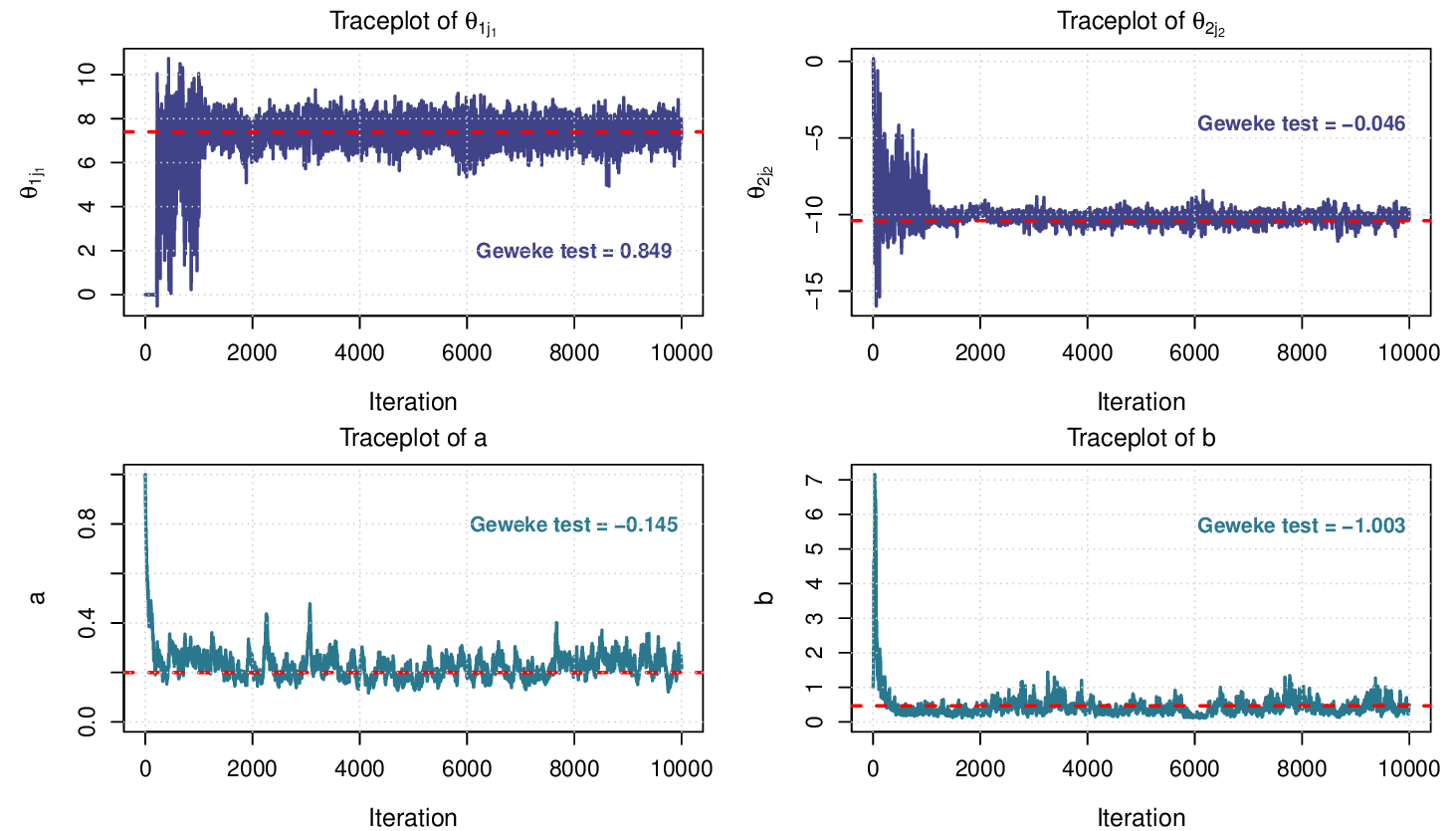}
  \caption{Top row: trace plots of two non-zero regression coefficients, $\theta_{1j_1}$ and $\theta_{2j_2}$, corresponding to two randomly selected indices $j_1$ and $j_2$. Bottom row: trace plots of the hyperparameters $a$ and $b$. All plots refer to a single replicated dataset from the first simulation study, under the scenario with $n = 25$, $k = 10$, and $\rho^* = 0.6$. The red lines denote the true values. Each panel displays the corresponding Geweke's z-score, with values falling within the range of $(-1.96,1.96)$ generally considered evidence of convergence.}
  \label{fig_traceplot}
\end{figure}

\subsection{Computation time}
Table \ref{table_computation_time} reports the computation times (in minutes) for the proposed Bayesian methods (DSS and ISS), as well as Q-learning with lasso, Q-learning with RF, and AOWL. These results are based on varying sample sizes and covariate dimensions to illustrate the scalability and computational cost of each approach. The Bayesian methods require longer computation times, particularly as the sample size increases.
\begin{table}
\caption{Computation times (in minutes) for DSS, ISS, Q-learning with lasso (QL), Q-learning with RF (QRF), and AOWL, referring to the total time required to analyze 100 datasets from the first simulation experiment, with $(k, n) \in \{(10,25), (20,50), (30,75)\}$
and $\rho^* = 0.6$. Times for DSS and ISS correspond to running Algorithm \ref{algo} for 10,000 iterations on a machine equipped with an Apple M3 chip and 8GB of unified memory.}\label{lpml clinical}
\begin{center}
\begin{tabular}{lcccccc}
\Hline
n   & (d1,d2) & DSS      & ISS      & QL   & QRF  & AOWL  \\ \hline
25  & (22, 33)  & 17.246 & 17.113 & 0.400 & 0.086 & 0.118 \\
50  & (42, 63)   & 28.743  & 28.706 & 0.615 & 0.281 & 0.125 \\
75  & (62, 93)   & 41.675 & 41.611  & 0.853 & 0.601 & 0.155 \\ \hline
\end{tabular}
\label{table_computation_time}
\end{center}
\end{table}

\section{Additional Results on the Clinical Case Study}\label{app:clinical}
We present additional results from the MIMIC-III data analysis, obtained by fitting DSS, ISS, Q-learning with lasso and Q-learning with RF.\\
Table \ref{fren analysis} reports the frequencies of  treatments actually administered in two stages in the
dataset. Table \ref{lpml clinical} compares the goodness-of-fit of the DSS and ISS approaches using the Bayesian Information Criterion (BIC) \citep{SCH78} and the Log Pseudo Marginal Likelihood (LPML) \citep{geisser1979predictive}, indicating a superior fit for the DSS approach. 
Table \ref{frequency all} reports the frequency of optimal treatment selections by DSS, ISS, Q-learning with lasso, and Q-learning with RF, across the two stages.
Table \ref{treat all} reports the optimal treatment selections for two randomly chosen patients, as determined by the four models across both stages.\\

Finally, since the posterior samples from Algorithm \ref{algo} allow us to compute the marginal probabilities of each antihypertensive agent being included in the optimal treatment, Figure \ref{second patient combination_new_2} presents these probabilities for the same two patients shown in Figure \ref{second patient combination_new_1}.
\begin{table}[H]
\centering
\caption{MIMIC-III data. Frequencies of actual treatments administered in two stages in the dataset. A refers to ACEi, B refers to beta-blockers, C refers to CCB, and D refers to diuretics.}\label{fren analysis}
\begin{tabular}{cccccccccc}
\Hline
        & A    & B    & C   & D    & AB  & AD  & BD  & ABD & Total \\
        \hline
Stage 2 & 23   & 60   & 15  & 143  & 28  & 14  & 31  & 6   & 320   \\
(\%)    & 7.2  & 18.8 & 4.7 & 44.7 & 8.8 & 4.4 & 9.7 & 1.9 &       \\
Stage 1 & 108  & 165  & 41  & 239  & 66  & 28  & 71  & 24  & 742   \\
(\%)    & 14.6 & 22.2 & 5.5 & 32.2 & 8.9 & 3.8 & 9.6 & 3.2 &   \\ \hline   
\end{tabular}
\end{table}

\begin{table}[H]
\centering
\caption{MIMIC-III data. Values taken by BIC and LPML for DSS, ISS.}\label{lpml clinical}
\begin{tabular}{cccccc}
\Hline
    & \multicolumn{2}{c}{Stage 2}      &       & \multicolumn{2}{c}{Stage 1}        \\ \hline 
    & DSS               & ISS       && DSS               & ISS    \\ \cline{2-3} \cline{5-6}
BIC &4681.889  &5023.239 
& &7023.151  &7160.299 \\ 
\multicolumn{1}{c}{LPML} &-2493.087       &-2510.148   
& &-3490.061      &-3495.198  
\\ \hline 
\end{tabular}
\label{bic and lpml}
\end{table}

\begin{table}[H]
\centering
\caption{Frequency of optimal treatment selections by DSS, ISS, Q-learning with lasso (QL), and Q-learning with RF (QRF) across two stages (stage 1: $n=742$; stage 2: $n=320$).}  
\begin{tabular}{cccccccccc}
\Hline
\multicolumn{1}{l}{Methods}            &                    & A               & B               & C               & D               & AB             & AD   & BD   & ABD \\  \hline 
\multirow{4}{*}{DSS}                   & Stage 2            & 86              & 125             & 86              & 66              & 25             & 198  & 45   & 111  \\
                                       & \%                 & 0.12            & 0.17            & 0.12            & 0.09            & 0.03           & 0.27 & 0.06 & 0.15 \\
                                       & Stage 1            & 59              & 39              & 32              & 114             & 38             & 2    & 26   & 10   \\
                                       & \%                 & 0.18            & 0.12            & 0.10            & 0.36            & 0.12           & 0.01 & 0.08 & 0.03 \\ \hline 
\multirow{4}{*}{ISS}                   & Stage 2            & 143             & 44              & 71              & 169             & 94             & 77   & 65   & 79   \\
                                       & \%                 & 0.19            & 0.06            & 0.10            & 0.23            & 0.13           & 0.10 & 0.09 & 0.11 \\
                                       & Stage 1            & 34              & 5               & 36              & 95              & 41             & 26   & 39   & 44   \\
                                       & \%                 & 0.11            & 0.02            & 0.11            & 0.30            & 0.13           & 0.08 & 0.12 & 0.14 \\ \hline 
\multirow{4}{*}{QL}                   & Stage 2            & 251             & 33              & 59              & 118             & 10             & 220  & 45   & 6    \\
                                       & \%                 & 0.34            & 0.04            & 0.08            & 0.16            & 0.01           & 0.30 & 0.06 & 0.01 \\
                                       & Stage 1            & 99              & 104             & 19              & 25              & 9              & 22   & 23   & 19   \\
                                       & \%                 & 0.31            & 0.32            & 0.06            & 0.08            & 0.03           & 0.07 & 0.07 & 0.06 \\ \hline 
\multirow{4}{*}{QRF}                  & Stage 2            & 220             & 154             & 92              & 122             & 64             & 58   & 22   & 10   \\
                                       & \%                 & 0.30            & 0.21            & 0.12            & 0.16            & 0.09           & 0.08 & 0.03 & 0.01 \\
                                       & Stage 1            & 130             & 66              & 16              & 68              & 32             & 2    & 5    & 1    \\
                                       & \%                 & 0.41            & 0.21            & 0.05            & 0.21            & 0.10           & 0.01 & 0.02 & 0.00 \\ \hline 
\end{tabular}
\label{frequency all}
\end{table}

\begin{table}[H]
\centering
\caption{Optimal treatments selected in two stages by DSS, ISS, Q-learning with lasso (QL), and Q-learning with RF (QRF) for the same two patients identified in Figure \ref{second patient combination_new_1}.} 
\begin{tabular}{llcc}
\Hline
                & Methods                  & Stage 2                                       & Stage 1                                 \\ \hline
                     \multirow{4}{*}{ \# individual 1 }             & DSS                      & n.a.                                           & ACEi                                    \\
                                 & ISS                      & n.a.                                           & ACEi+diuretics                          \\
                                 & QL                      & n.a.                                          & ACEi                                    \\
                                 & QRF                     & n.a.                                          & beta-blockers                           \\
 \hline
               \multirow{4}{*}{ \# individual 2 }                     & DSS                      & diuretics                                     & beta-blockers+diuretics                 \\
                                 & ISS                      & ACEi+beta-blockers+diuretics                  & CCB                                     \\
                                 & QL                      & ACEi                                          & beta-blockers                           \\
                                 & QRF                     & diuretics                                     & ACEi    \\                               \hline
\end{tabular}
\label{treat all}
\end{table}

\begin{figure}[h]
\centering
\includegraphics[clip,trim=0.6cm 1.5cm 1.0cm 1.5cm,width=0.3\textwidth]{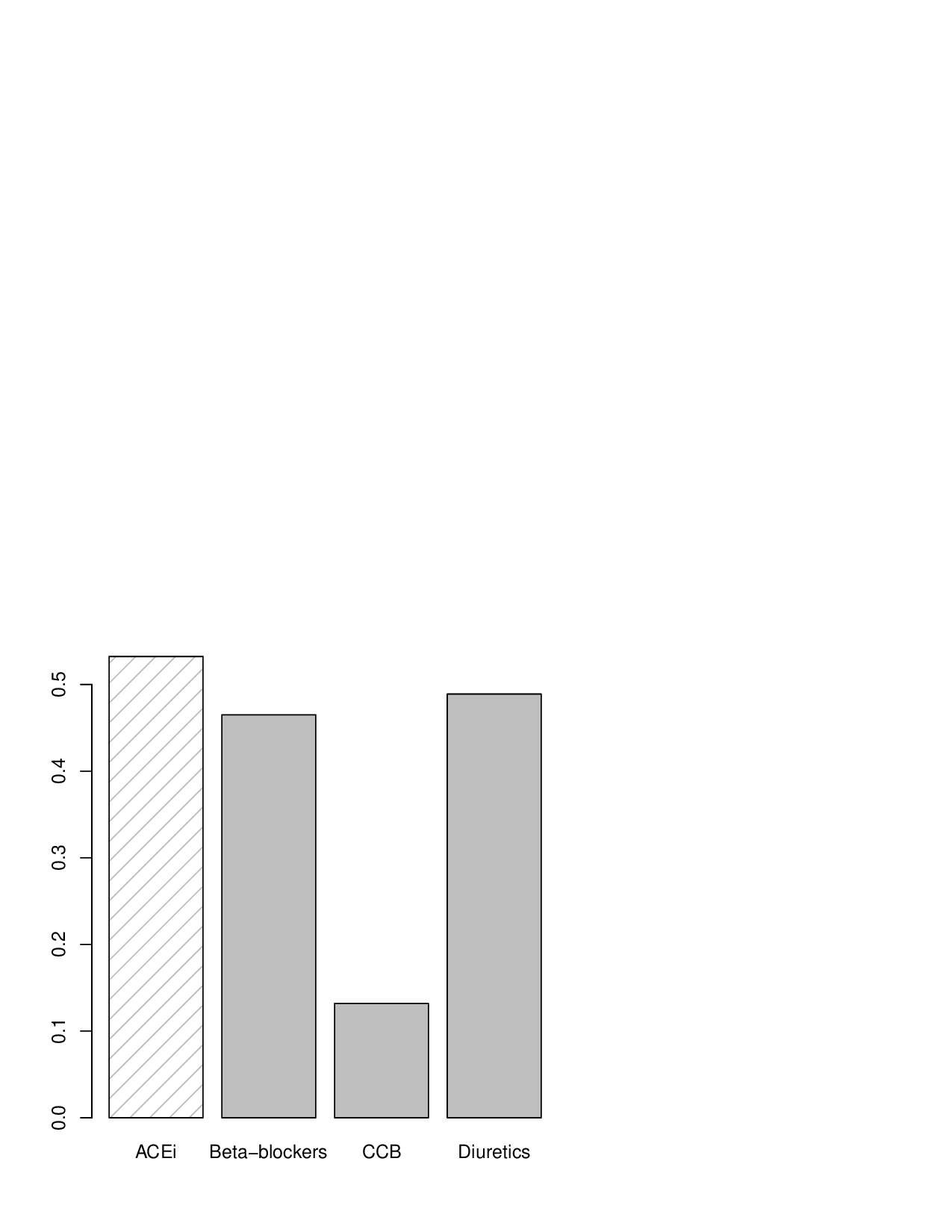}\hspace{0.3cm}
\includegraphics[clip,trim=0.6cm 1.5cm 1.0cm 1.5cm,width=0.3\textwidth]{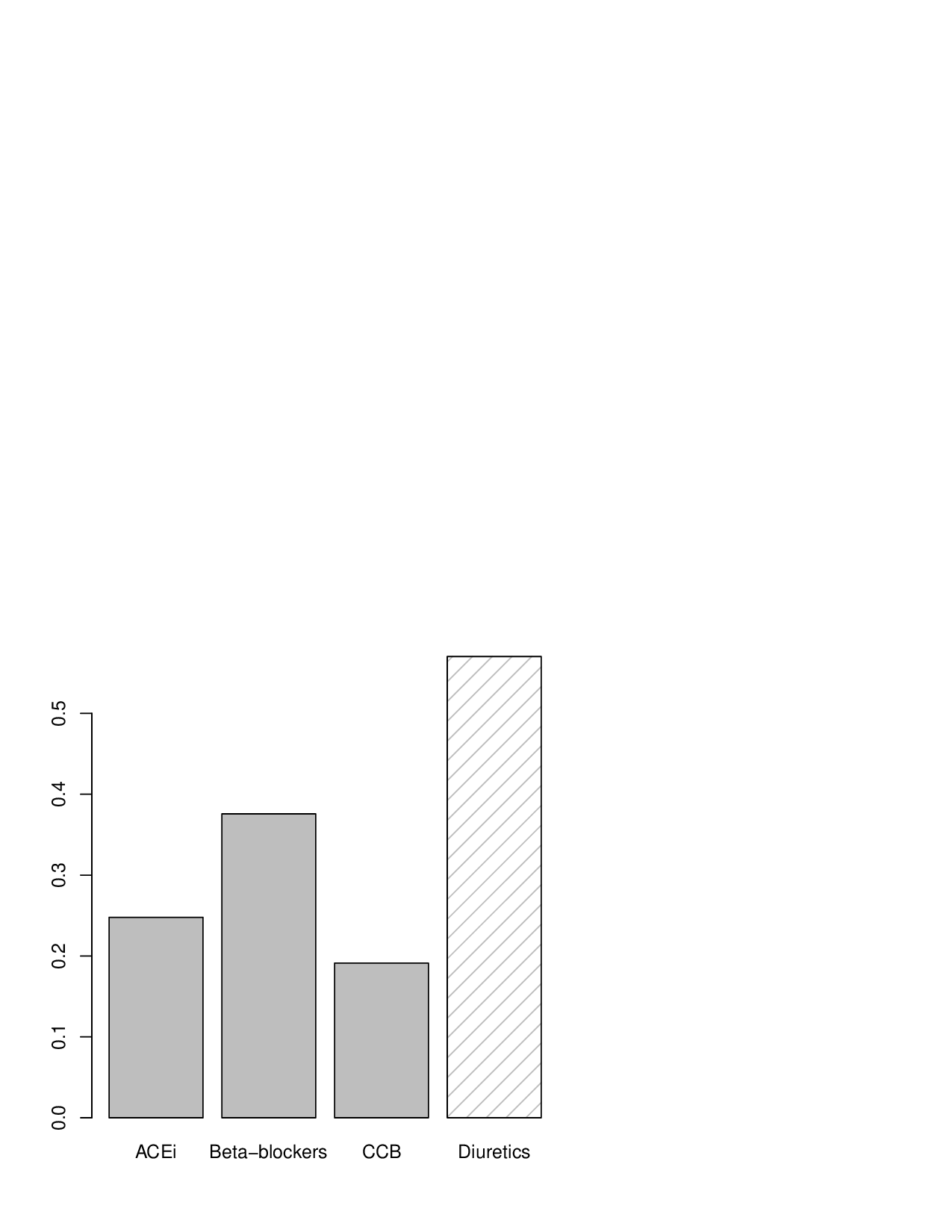}\hspace{0.3cm}
\includegraphics[clip,trim=0.6cm 1.5cm 1.0cm 1.5cm,width=0.3\textwidth]{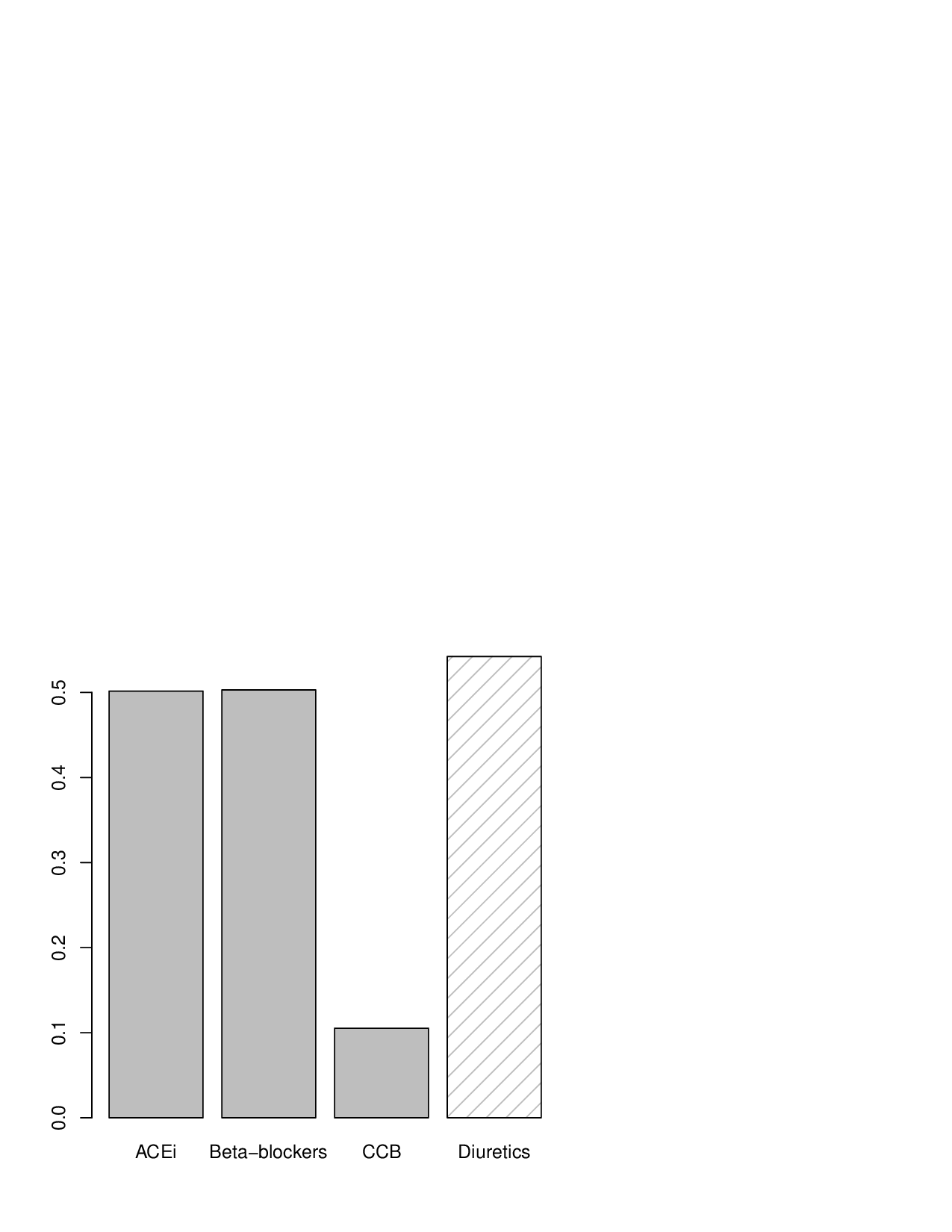} 
\caption{MIMIC-III data. Posterior marginal probabilities, estimated by DSS, for each of the four antihypertensive agents being included in the optimal treatment combination: Individual \#1 at stage 1 (left panel), Individual \#2 at stage 2 (middle panel), and Individual \#2 at stage 1 (right panel). In each bar plot, the white segment highlights the agent with the highest posterior marginal probability.}
\label{second patient combination_new_2}
\end{figure}

\end{appendices}

\clearpage
\bibliographystyle{apalike}
\bibliography{references}

\end{document}